\newcolumntype{L}[1]{>{\raggedright\let\newline\\\arraybackslash\hspace{0pt}}m{#1}}
\newcolumntype{C}[1]{>{\centering\let\newline\\\arraybackslash\hspace{0pt}}m{#1}}
\newcolumntype{R}[1]{>{\raggedleft\let\newline\\\arraybackslash\hspace{0pt}}m{#1}}
\begin{document}
%
%\SetKwRepeat{Do}{do}{while}

%%%%%%%%%%%%%%% article 

% \theoremstyle{definition}
\newtheorem{definition}{Definition}[section]

\newtheorem{theorem}{Theorem}[section]

%%%%%%%%%%%%%%%%article end

\title{Efficient Semantic Summary Graphs for Querying Large Knowledge Graphs}

\author{%
\textsc{Emetis Niazmnad} \\[1ex] % Your name
\normalsize TIB Leibniz Information Centre for Science and Technology, Hannover\\
Leibniz University of Hannover, Germany  \\ % Your institution
\normalsize \href{mailto:Emetis.Niazmand@tib.eu}{Emetis.Niazmand@tib.eu} % Your email address
\and
\textsc{Gezim Sejdiu} \\[1ex] % Second author's name
\normalsize  Smart Data Analytics, University of Bonn, Bonn,
Germany  \\ % Second author's institution
\normalsize \href{mailto:sejdiu@cs.uni-bonn.de}{sejdiu@cs.uni-bonn.de} % Second author's email address
\and % Uncomment if 2 authors are required, duplicate these 4 lines if more
\textsc{Damien Graux} \\[1ex] % Second author's name
\normalsize Inria, Université Côte d’Azur, CNRS, I3S
Sophia Antipolis, France  \\ % Second author's institution
\normalsize \href{mailto:damien.graux@inria.fr}{damien.graux@inria.fr} % Second author's email address
\and 
\textsc{Maria{-}Esther Vidal} \\[1ex] % Second author's name
\normalsize TIB Leibniz Information Centre for Science and Technology, Hannover\\
Leibniz University of Hannover, Germany  \\ % Second author's institution
\normalsize \href{mailto:Maria.Vidal@tib.eu}{Maria.Vidal@tib.eu} % Second author's email address
}

%\date{Received: date / Accepted: date}
\maketitle              % typeset the header of the contribution
\begin{abstract}
Knowledge Graphs (KGs) integrate heterogeneous data, but one challenge is the development of efficient tools for allowing end users to extract useful insights from these sources of knowledge. In such a context, reducing the size of a Resource Description Framework (RDF) graph while preserving all information can speed up query engines by limiting data shuffle, especially in a distributed setting. This paper presents two algorithms for RDF graph summarization: Grouping Based Summarization (GBS) and Query Based Summarization (QBS). The latter is an optimized and lossless approach for the former method.  We empirically study the effectiveness of the proposed lossless RDF graph summarization to retrieve complete data, by rewriting an RDF Query Language called SPARQL query with fewer triple patterns using a semantic similarity. We conduct our experimental study in instances of four datasets with different sizes. Compared with the state-of-the-art query engine Sparklify executed over the original RDF graphs as a baseline, QBS query execution time is reduced by up to 80\% and the summarized RDF graph is decreased by up to 99\%.
\end{abstract}

{\bf Keywords:} Knowledge Graph, Summarization Graph, SPARQL Evaluation, Embedding Model, Distributed Context.

\section{Introduction}
\label{intro}
During the past decades, the number of linked datasets -- known as knowledge graphs (KGs)-- has rapidly increased as evidenced in the current state of the Linked Open Data cloud\footnote{As of August 2021, the LOD-cloud gathers around 1,512 datasets sharing 413,734,019,304 RDF Triples. \url{https://lod-cloud.net/}}.
These datasets are structured following the W3C's standard Resource Description Framework, RDF (\cite{rdf}), and share knowledge on various domains, from a more general purpose KGs such as DBpedia (\cite{dbpedia}) or WikiData (\cite{wikidata}) to specialized ones, e.g., SemanGit (\cite{kubitza2019semangit}).
Real-world applications over these types of sources demand the development of optimized techniques to extract meaningful information. 
The Semantic Web community has actively contributed to RDF management and has proposed formalisms, e.g., SPARQL (\cite{harris2013sparql}) and SHACL (\cite{Spahiu2018}), to express queries and integrity constraints over RDF graphs. 
Moreover, within the years, efficiency has also been addressed, and various methods have been proposed; they include methods to store RDF graphs, e.g., centralized (\cite{faye2012survey}) or distributed (\cite{kaoudi2015rdf}), as well as to query RDF graphs (\cite{VidalRLMSP10}). 
Indeed, the task of query processing can become incredibly complex whenever RDF graphs come along with large ontologies, and there may be portions of the ontology with no instances in a knowledge graph. 
Also, complex queries that include graph pattern expressions (e.g., multi-union queries) represent challenges for query engines in processing time (\cite{PerezAG09}). 
Graph summarization is a technique to solve this issue by providing a compact representation of a graph where redundant data is reduced (\cite{ShinG0R19}). As a result, a summarized graph's size is decreased, and effective techniques can be devised to speed up query processing (\cite{kondylakis:hal-02081474}). 
RDF summarization has been used in query answering and optimization. It has been applied to recognizing the most notable nodes, discovering schema from the data, and visualizing the RDF graph to quickly understand the data (\cite{ebiri2018SummarizingSG}). We propose graph summarization methods by applying both word embedding and graph embedding models to find the most similar predicates by encoding them as vectors. The word embedding models resort to Natural Language Processing (NLP) techniques to represent words in a numeric vector space (\cite{DBLP:books/lib/JurafskyM09}). Word embedding models used to convert textual information and social media data such as tweet sentences to numeric weightage in vector format. They are studied in a specific domain to solve real issues, such as \cite{NEOGI2021100019} and \cite{10.3389/fcomp.2021.775368}. There are more use cases which employ word embedding models for vector representation of textual words. \cite{Chauhan2021OptimizationAI} propose a framework which improves the detection of fake news and real news. This framework makes use of neural networks and a tokenization
method. The tokenization
method has been proposed for feature extraction or vectorization, which
assigns tokens to word embeddings. Word embedding models can be applied to RDF graphs as well, and RDF2Vec is an exemplary approach presented by~\cite{DBLP:conf/semweb/RistoskiP16}.

We aim to provide an algorithm in which a summarized RDF graph groups RDF triples composed of similar predicates; the similarity metrics computed over the embeddings determine this relatedness. SPARQL queries are rewritten based on the summarized RDF graph. As a result, query execution time reduced, while answer completeness is maximized. Our goal is to achieve the following research objectives:
\begin{itemize}
    \item Role of summarization in the RDF graph size reduction.
    \item Impact of summarization in query processing. 
\end{itemize}

% \noindent
% \textbf{Approach:} 
Two approaches are presented: Grouping Based Summarization (GBS) and Query Based Summarization (QBS). GBS decreases an RDF graph size and QBS considers criteria of graph summarization to rewrite SPARQL queries into queries with fewer triple patterns, but with equivalent answers. Our query rewriting techniques resort to semantic similarity metrics to identify related predicates in the triple patterns of a SPARQL query and replace them with a predicate that represents all of them. QBS has the following desirable characteristics: \begin{inparaenum}[\bf a\upshape)]\item \textit{Compactness}: graph summarization provides fewer nodes and edges compared with the original RDF graph by considering only a part of the RDF graph which is related to the given SPARQL query; \item \textit{Lossless query processing}: returns the same answers by querying over summarized graph compared with the original one based on similarity metric by transforming a query to the simple one; and \item \textit{Low-cost query processing}: speeds up query processing over the summarized RDF graph. The GBS and QBS performance is evaluated; the Sparklify component (\cite{stadler2019sparklify}) is used as a default query engine from the SANSA Stack (\cite{lehmann-2017-sansa-iswc}). The Waterloo SPARQL Diversity Test Suite (WatDiv) benchmark generator (\cite{alucc2014diversified}) is utilized to generate two RDF graphs (WatDiv.10M and WatDiv.100M) and queries; also, the Entity Summarization Benchmark, ESBM (\cite{ESBM}), and a dump of DBpedia\footnote{\url{https://wiki.dbpedia.org/}} are included in the study. We report on twenty queries where the execution time is accelerated by up to 80\%. The observed results are promising and provide evidence of our proposed approaches' compactness power and their impact on query processing. 
\end{inparaenum}

In particular, the contributions of this work are as follows:
\let\labelitemi\labelitemii
\begin{itemize}
% \addtolength{\itemindent}{1cm}
\item Graph summaries are able to reduce RDF triples required in query processing.
\item Query rewriting techniques guided by RDF graph summarization. These techniques ensure answer completeness.
\item An empirical study over state-of-the-art benchmarks. Observed results indicate the positive effects of reducing redundant information in the portion of an RDF graph required to execute a SPARQL query.
\end{itemize}

The rest of the paper is organized as follows: 
In \autoref{sec:related}, we review the related efforts in the domain of RDF summarization. \autoref{sec:motiv} presents an in-depth example to illustrate our challenges. 
\autoref{sec:approach} presents our proposed approaches. The methodology and the results of our empirical evaluation are reported in \autoref{sec:evaluation}. We have
mentioned our discussions in \autoref{sec:discussion}. Finally, in \autoref{sec:conclusion}, we conclude and draw the next challenges to be addressed.

\section{Related work}
\label{sec:related}
Graph summarization techniques reduce the size of graph, speed up graph query evaluation, as well as facilitate graph visualization and analytics. In addition, it provides semantic searches with a reduction in computational complexity. We analyze existing approaches for RDF graph summarization and query processing over summarized RDF graphs.
%The summarization approaches can be categorized into summarization techniques for graph databases and RDF graphs. 

\subsection{Graph Summarization in RDF}
Graph databases use graph structures for representing entities as nodes and their relationships as edges of a graph (\cite{bourbakis1998artificial}).
The increment of data in graph databases makes the query processing complicated. Summarization technique is a way to overcome the complexity of search query in graph databases (\cite{LeFevreT10}). Graph summarization has been studied for semi-structured graph data models such as XML (\cite{semi-structured}) and RDF graphs. There are many existing works in this area to compact RDF graphs whose structure
is computed from the original RDF graph, such that all the paths present in the original graph are also present in the summary graph (\cite{DBLP:journals/corr/abs-2004-14794}). 
%We should be aware that decreasing the size of the graph does not mean that we will lose the main information from the RDF graph. For this purpose, the method chosen for generating a summary graph plays an important role to return the appropriate results. Various summarizing graph techniques have been proposed with the aim of reducing the size of RDF graphs. 
These techniques can be classified into four categories, as the following (\cite{ebiri2018SummarizingSG}):

\subsubsection{Structural methods}
This summarization method considers the structural RDF graphs. One summarization technique following this method is the adaptive structural summary for RDF graph (ASSG) presented by \cite{ASSG}. It compresses a part of RDF graph which is considered by a collection of queries. 
This technique requires some user-selected queries for building the summary graph. By compressing only the part which consists of the users' queries, the number of edges and nodes is decreased. This technique considers only the structure of the RDF graph and not from a semantic point of view. The Query Based Summarization approach presented in this paper is based on a similar method, but by considering both structure and semantic. Practically, the nodes with the same labels and ranks are assigned in the same equivalence class. Each equivalence class in graph data has a set of nodes, the rank of the nodes, and the labels of the nodes. Therefore, the graph data is divided into some equivalence classes. As a consequence, the compressed graph has fewer nodes and edges compare with the original one. The approach by \cite{10.1145/2630602.2630610} uses structural methods to summarize the RDF graphs and studies efficient query processing in the TriAD (\cite{10.1145/2588555.2610511})
a distributed RDF data management engine, by relying on a summary of the RDF graph stored within the
system. The other approach presented by \cite{Sydow} shows the problem of selecting the most important part of an RDF graph based on a chosen entity by user. This approach lets the system generate a summarization of facts concerning the selective entity where is close to the Query Based Summarization approach presented in this paper.

\subsubsection{Pattern mining methods}
This method discovers patterns to build summary graphs. For example, \cite{Zneika} presents an approach for summarizing RDF graphs using mining a set of approximate
graph patterns and calculating the number of instances covered by each pattern. Then it transforms the patterns to an RDF schema that describes the contents of the knowledge graph. In that case, the evaluation of queries are done over the summarized graph instead of the original graph. Moreover, the computational methods presented by \cite{Karim2020CompactingFS} identify frequent star patterns to generate compact representation of RDF graphs, with a minimized number of frequent star patterns.

\subsubsection{Statistical methods}
This summarization method follows a frequency-based perspective to summarize graphs. The work by \cite{Ghasemi} presents a technique called CoSum where a multi-type graph is as an input and the output is a super-graph. CoSum is assigned to statistical RDF summarization type; it generates summary graphs frequency-based and in quantitatively way. According to the authors, CoSum technique should be used for summarizing the graph by clustering the nodes which share the same type. The idea of grouping subjects with the same predicate and objects in Grouping Based Summarization approach comes from this technique. Then, each cluster refers to the Super-Node which consists of nodes with the same type. These Super-Nodes are linked to each other by weighted edges. Therefore, the main purpose of this approach is to automatically group elements that correspond to the same entity, which is called \textit{Entity Resolution} (\cite{Benjelloun2009}). In general, this technique tries to transform a k-type graph to another k-type summary graph, which consists of Super-Nodes and Super-Edges linked among each other. CoSum as a summarizing technique provides a solution to deal with the following challenges:
\begin{inparaenum}[\bf i\upshape)]
    \item An RDF graph is modeled as a multi-type graph and the collective entity resolution is formulated as a multi-type graph summarization problem.
    \item A multi-type graph co-summarization-based method is proposed in order to identify entities and link connections between them at the same time.
    \item A generic framework is provided to accept different domains-specific knowledge.
\end{inparaenum}
In the summary graph, each Super-Node is a group of some vertices with the same type and each Super-Edge connects these clusters of nodes to each other.

\subsubsection{Hybrid methods}
This method combines two or all other categories to generate summary graphs. As an example, the summarization method presented by \cite{Zheng:2016:SSS:2983200.2983201} is the hybrid RDF summarization method, because it considers both structure and patterns to construct a summary graph.

\subsection{Semantic search and query processing}
\label{sec:SemanticSearch}
Motivated by the aim of simplifying queries consisting of the union of some queries, \cite{Zheng:2016:SSS:2983200.2983201} developed a solution based on similarity search. This means instead of using multiple union of queries to get the complete answers, only by using a single or less union of queries can get the correct and the same results which are given by multi-union queries.
Our work shares the same observation: reducing SPARQL complexity can be achieved thanks to summarization techniques. This technique includes both structural and semantic similarities. Since queries can be different in terms of structure, but they have similar semantic meanings, some operations such as semantic path substitution are introduced. Semantic path substitution operation is used to replace a path
with an edge by mining the structure patterns. A dictionary of semantic instances is provided to mine semantic graph patterns by keeping instances which are semantically equivalent. Finally, rewriting a given query graph by semantic path substitution gives a set of semantically equivalent queries.
Also, based on these operations, a similarity measure, called \textit{Semantic Graph Edit Distance} \textit{(sged)}, is defined by \cite{Zheng:2016:SSS:2983200.2983201}. \textit{Sged} measures the cost of transforming one Sub-Graph to another one. 
Then Sub-Graphs extracted from the RDF graph will be chosen to provide the summary graph if they have minimum \textit{sged-based} transformation cost.

\begin{figure*}[t!]
\centering  
\hspace{0pt}\subfloat[Original RDF Graph]{
     \includegraphics[width=0.5\textwidth]{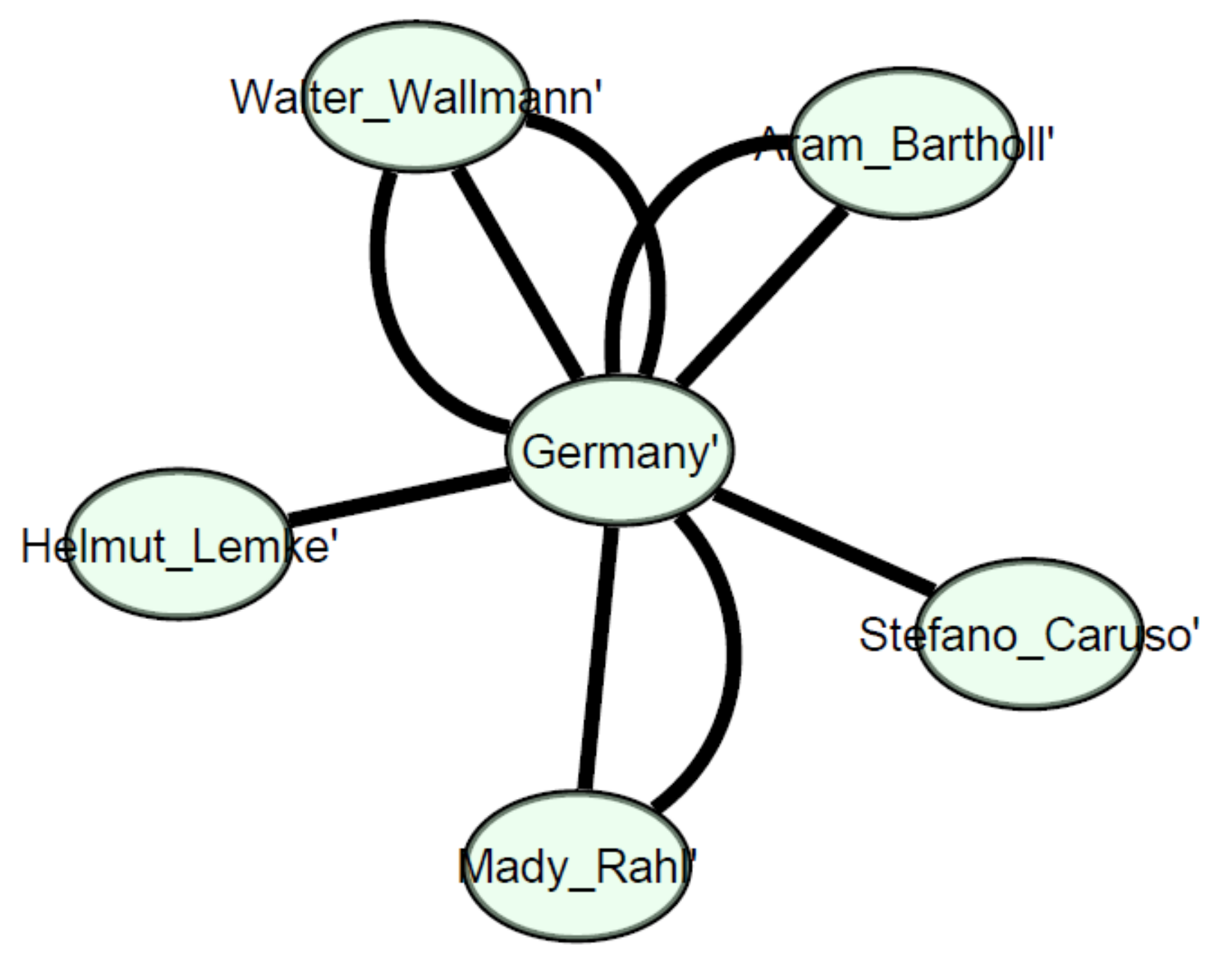}
     \label{fig:1a}}
     \vspace{0pt}\subfloat[Summarized RDF Graph]{
     \includegraphics[width=0.4\textwidth]{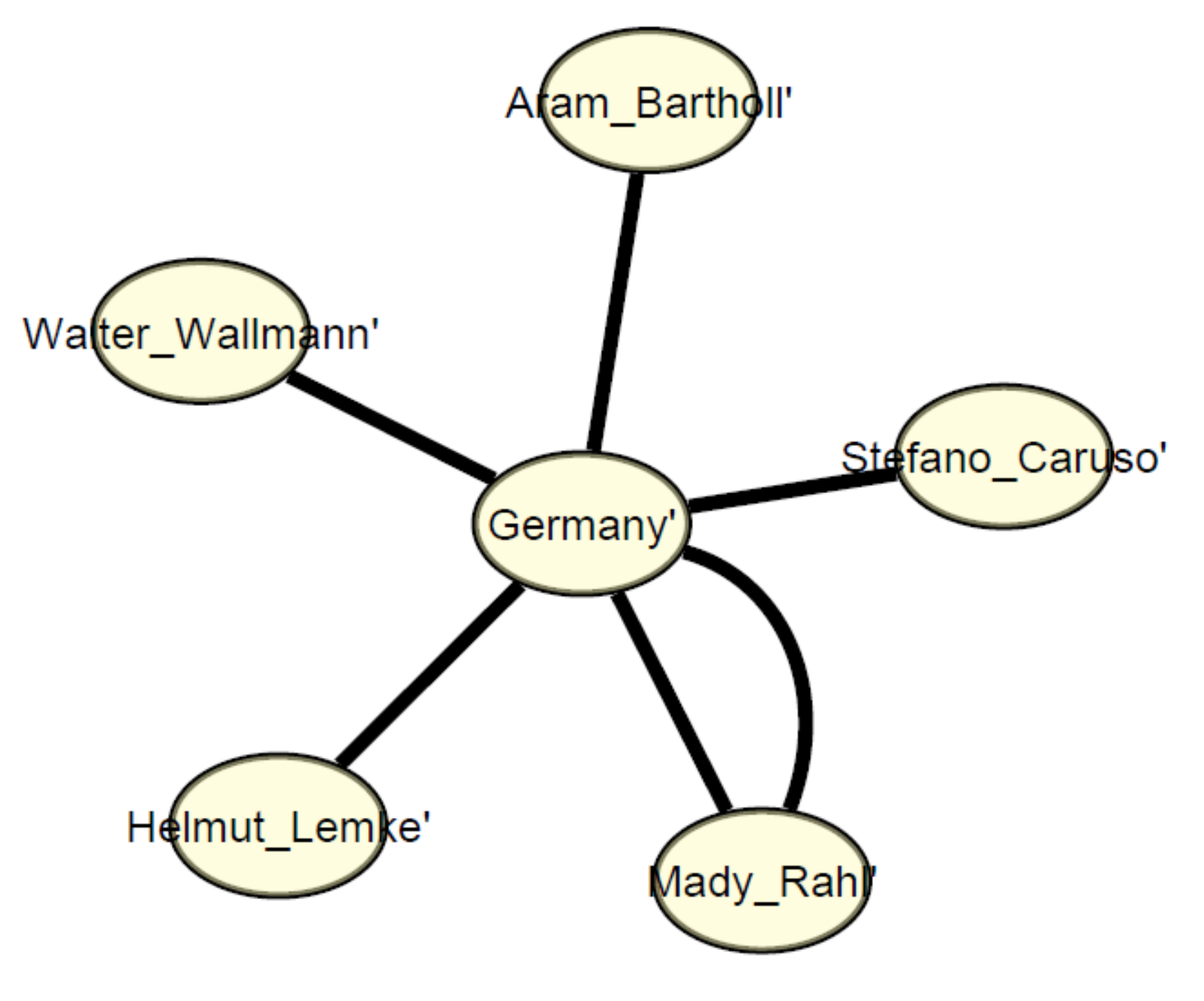}
     \label{fig:1b}}
      \hspace{0pt}\subfloat[Triples summarization]{
     \includegraphics[width=.95\textwidth]{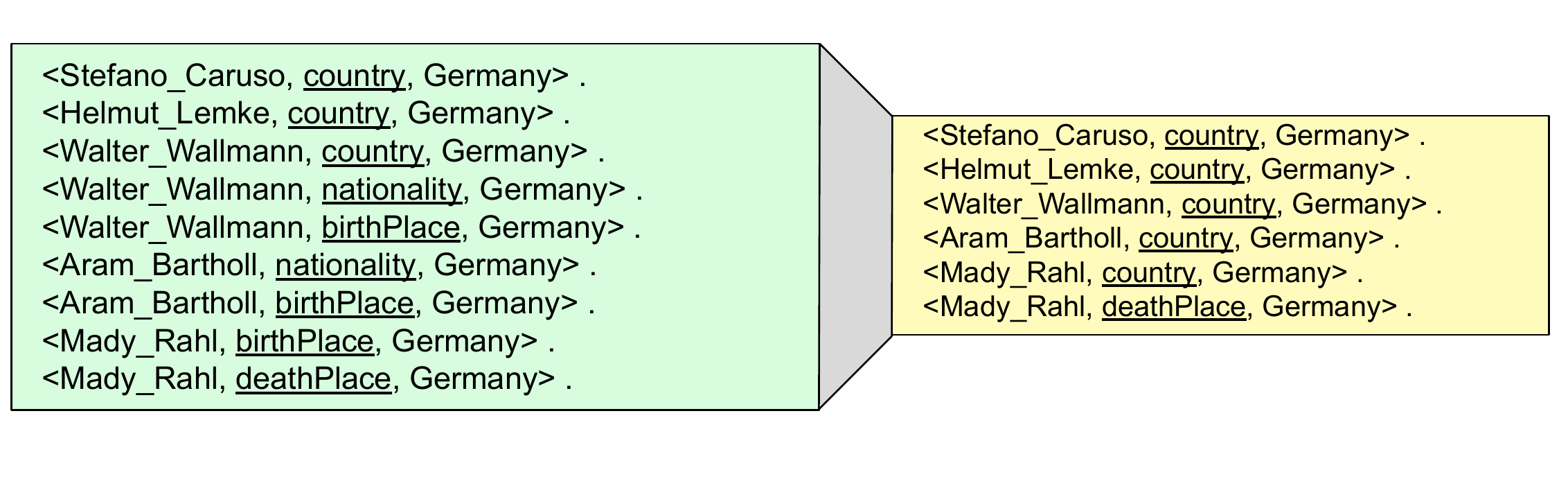}
     \label{fig:1c}}
\caption{{\bfseries Motivating Example (Compactness).} (a) An RDF graph representing entities with similar properties; (b) A lossless summarization of the RDF graph preserving main information; (c) Triples related to the RDF graph in the green box are summarized to a smaller portion in the yellow box. }
 \label{fig:MotiExmp}
\end{figure*}

\begin{figure*}[t!]
\centering  
\hspace{0pt}\subfloat[Original RDF Graph and SPARQL query]{
     \includegraphics[width=1\textwidth]{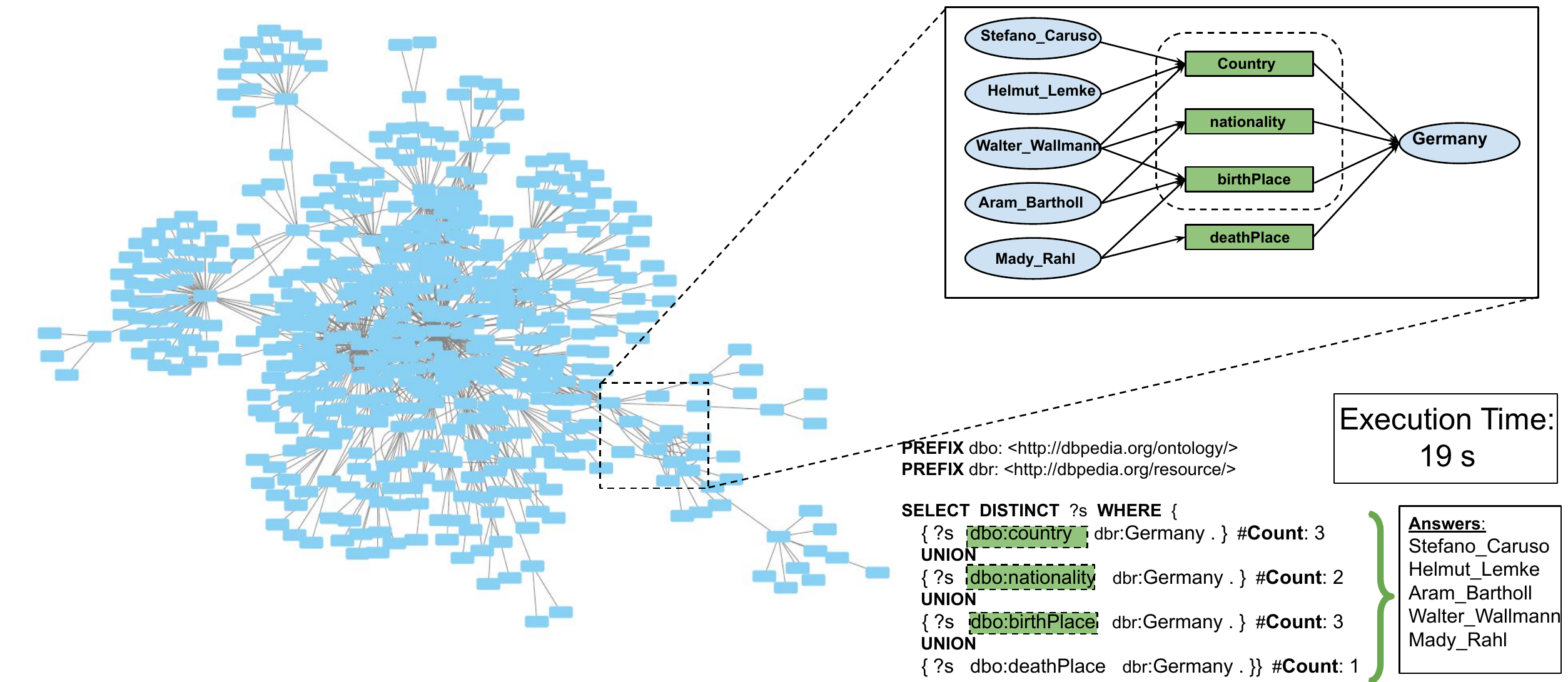}
     \label{fig:dbpediaOrg}}
     \hspace{0pt}\subfloat[A Naive Approach for \\Summarizing RDF Graphs]{
     \includegraphics[width=.5\textwidth]{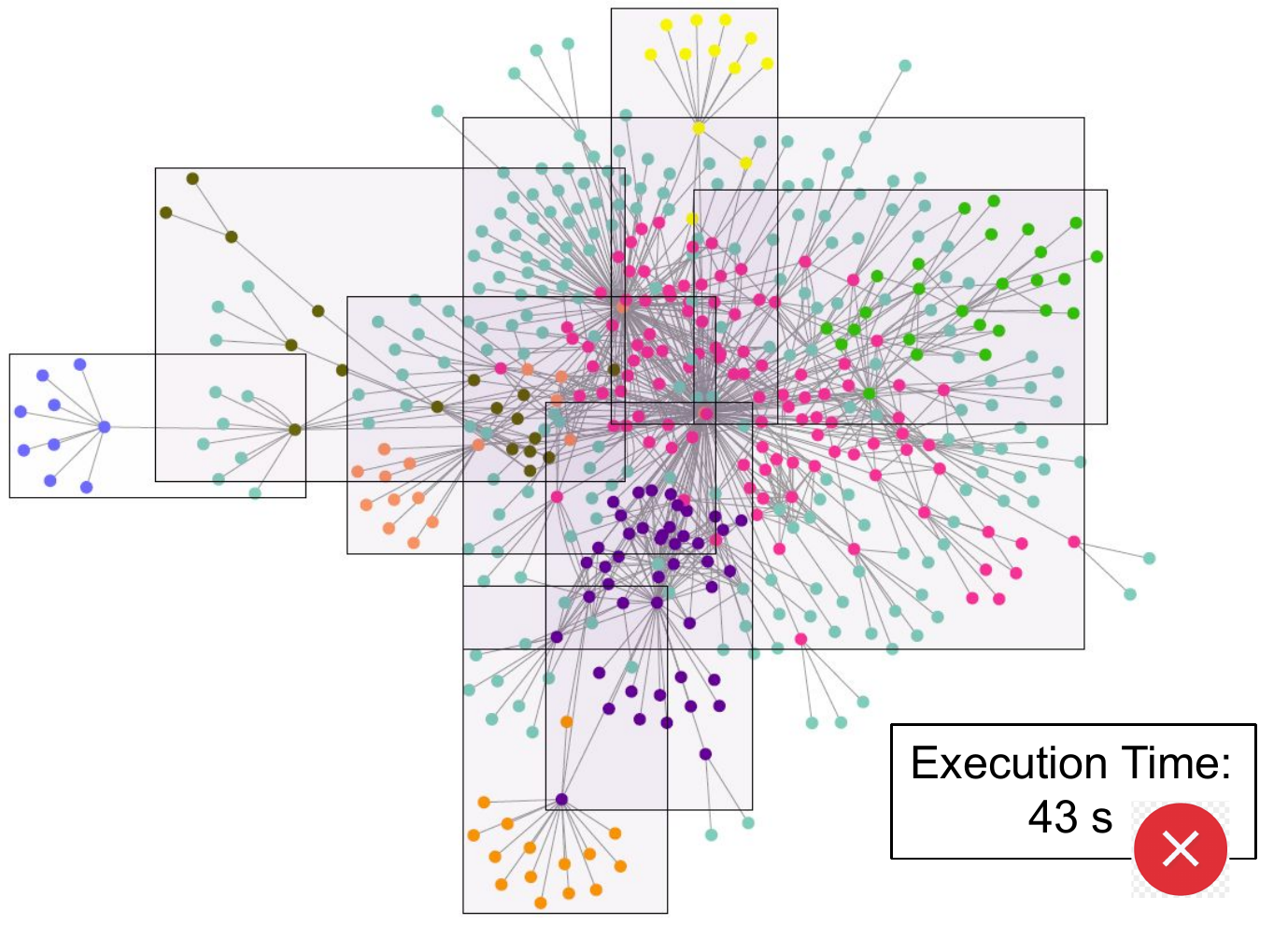}
     \label{fig:dbpediaSumG}}
     \hspace{0pt}\subfloat[An Optimized Approach for \\Summarizing RDF Graphs]{
     \includegraphics[width=.45\textwidth]{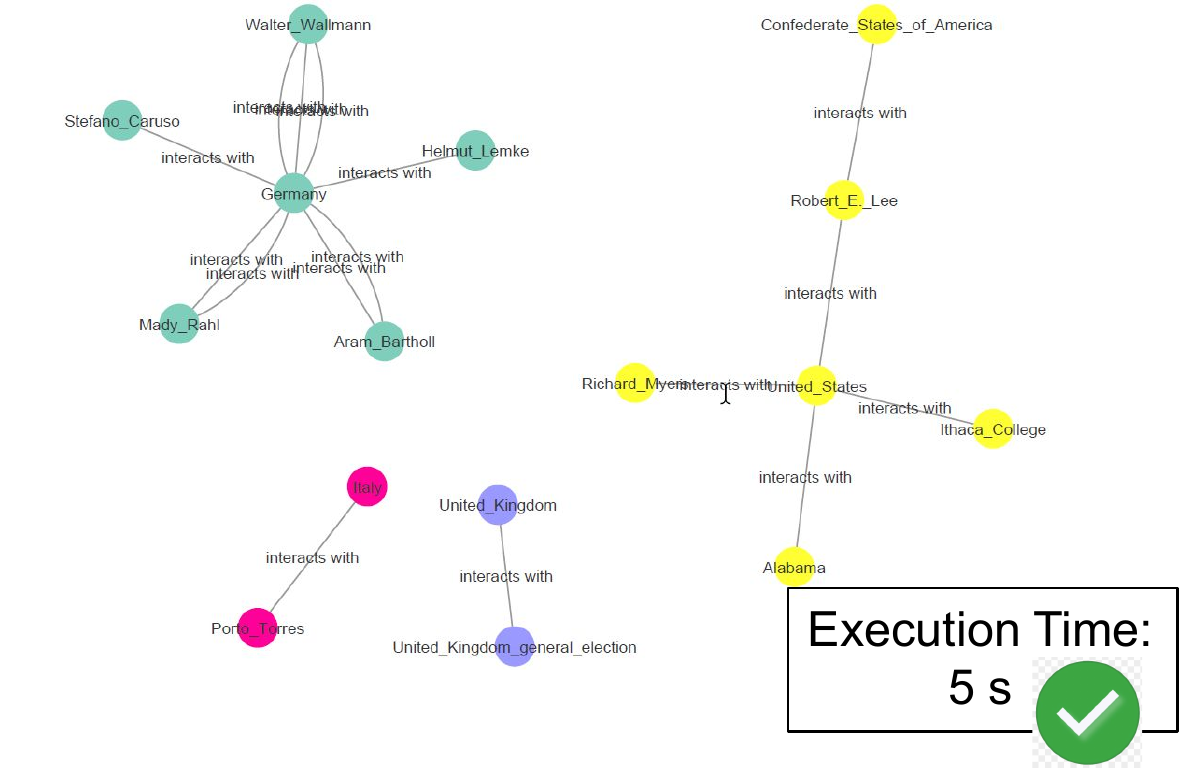}
     \label{fig:dbpediaSum}}
\caption{{\bfseries Motivating Example (Execution Time).} (a) A SPARQL query retrieves five answers in 19 seconds from the original RDF graph; (b) Summarized RDF graph by naive approach retrieves the same answers in 43 seconds; (c) Optimized approach retrieves the same answers in 5 seconds.}
 \label{fig:cytoscape}
\end{figure*}

\section{Challenges and motivation}

\label{sec:motiv}
Efficient query processing over large RDF graphs is one of the main challenges in data management. 
We motivate this data management problem with two examples and illustrate -- with a real-world use case -- the impact of an RDF graph size on execution time. Figure \autoref{fig:1a} depicts a portion of an RDF graph with entities related by properties. It comprises properties that are semantically similar (e.g., $country$, $nationality$, and $birthPlace$). 
Let us consider a graph summarization method by \cite{Ghasemi} that groups similar entities and properties in a graph.
All the elements of a group (i.e., entities or properties) are summarized into one element (i.e., into an entity or a property) in the summarized graph. Some works have been done for grouping the elements with
similar semantic meaning, e.g., \cite{Singh} propose a method to group the terms with similar semantic meaning by evaluating the similarity between words using GloVe (\cite{DBLP:conf/emnlp/PenningtonSM14}).

The results of applying this method to the RDF graph in Figure\autoref{fig:1a} are illustrated in  Figure\autoref{fig:1b}. Moreover, Figure\autoref{fig:1c} presents the RDF serialization of the RDF triples in Figure\autoref{fig:1a} and Figure\autoref{fig:1b}.

A portion of DBpedia that consists of 2,047 RDF triples
is presented in Figure \autoref{fig:dbpediaOrg}. Nodes and edges are represented as ovals and rectangles, respectively. The dataset has 1,000 edges and 510 nodes. This figure also presents a SPARQL query comprising four triple patterns. 
The evaluation of this query retrieves five answers that correspond to the names of people in Germany, or with German nationality, or born or die in Germany. 
The Sparklify query engine~\footnote{\url{http://sansa-stack.net/sparklify/}} produces these answers in 19 seconds. 
Figure \autoref{fig:dbpediaSumG} and Figure \autoref{fig:dbpediaSum} depict the execution time over summarized RDF graphs computed following the two graph summarization methods previously described.  
The results in Figure\autoref{fig:dbpediaSumG} suggest that the execution of the SPARQL query over the summarized graph by naive approach, even producing all the results, can be costly. This approach grouped source nodes with similar edges and target nodes in 174 Sub-Graphs. Thus, query processing over the summarized RDF graph requires 43 secs. to produce the five answers. Alternatively, graph summarization can be done during query processing. An optimized query-based graph summarization method  can identify the portion of the RDF graph required to answer the query, and then it summarizes only this portion of the original RDF graph. The results in Figure\autoref{fig:dbpediaSum} show the query execution over a summarized RDF graph by optimized approach with 16 edges and 17 nodes; it retrieves all the five answers in 5 seconds. The RDF graphs in ~\autoref{fig:cytoscape} are generated by Cytoscape~\footnote{\url{http://www.cytoscape.org/}}.

Figure \autoref{fig:3a} presents a compact representation of the portion of DBpedia in Figure \autoref{fig:dbpediaOrg} required to answer the SPARQL query in Figure \autoref{fig:3b}. This graph represents--with one property-- the three related properties \textit{country}, \textit{nationality}, and \textit{birthPlace}. The property \textit{deathPlace} remains in the compact query since it is not semantically similar to other properties. 
This compact modeling reduces the RDF graph size and enables rewriting the query into a query with fewer triples in Figure \autoref{fig:3b}. As a result, the rewritten query execution time is reduced to 5 seconds, while the complete five answers are produced.
These examples illustrate the relevance of efficient summarization techniques in RDF data management. In \autoref{sec:approach}, we address this problem, and describe summarization techniques able to
reduce the size of RDF graphs and speed up execution time during query processing over summarized graphs.

\begin{figure*}[t!]
\centering  
\vspace{0pt}\subfloat[Summary RDF graph with fewer properties and entities]{
     \includegraphics[width=.45\textwidth]{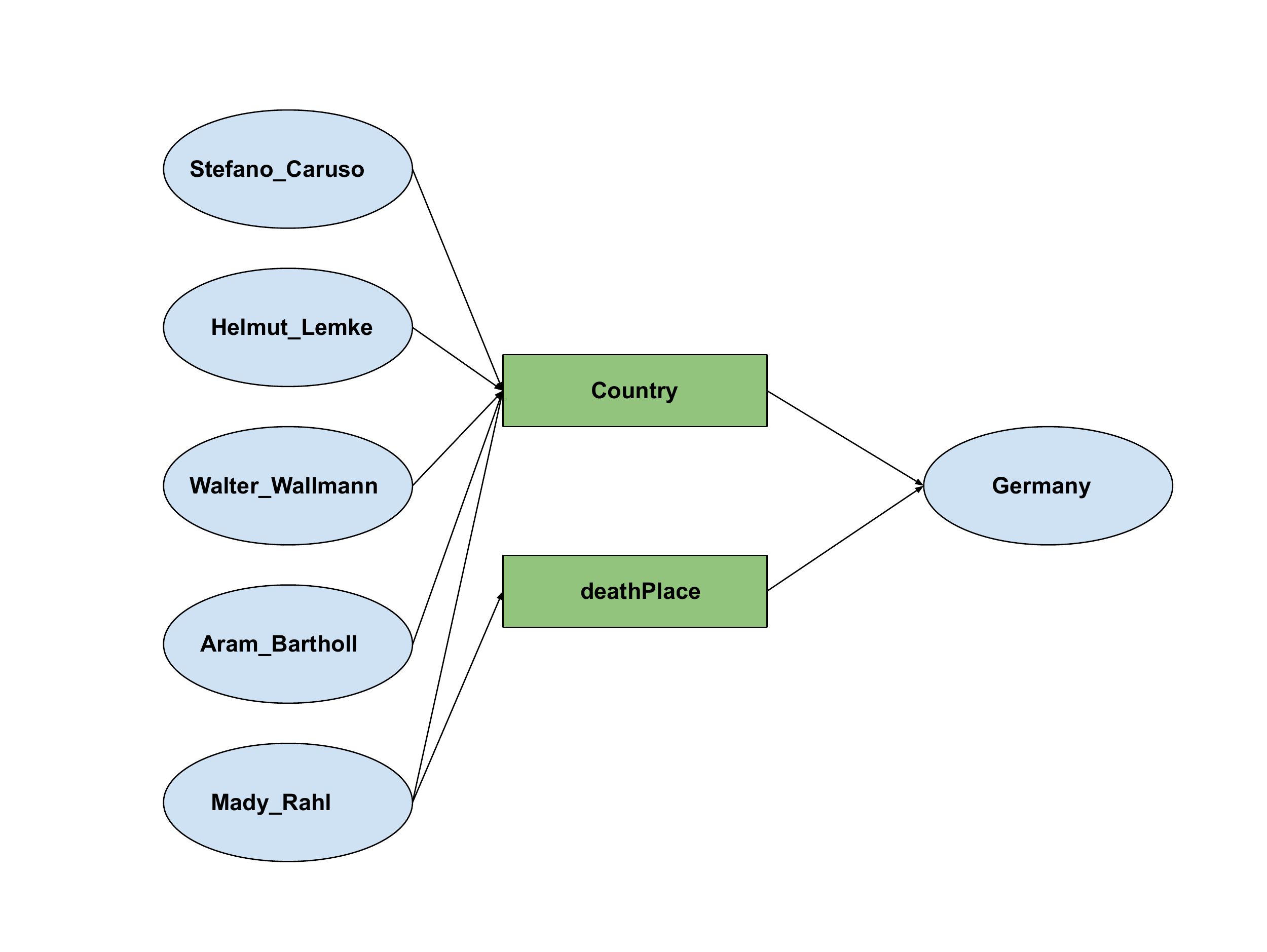}
     \label{fig:3a}}
     \vspace{0pt}\subfloat[Summarized triple pattern query]{
     \includegraphics[width=.55\textwidth]{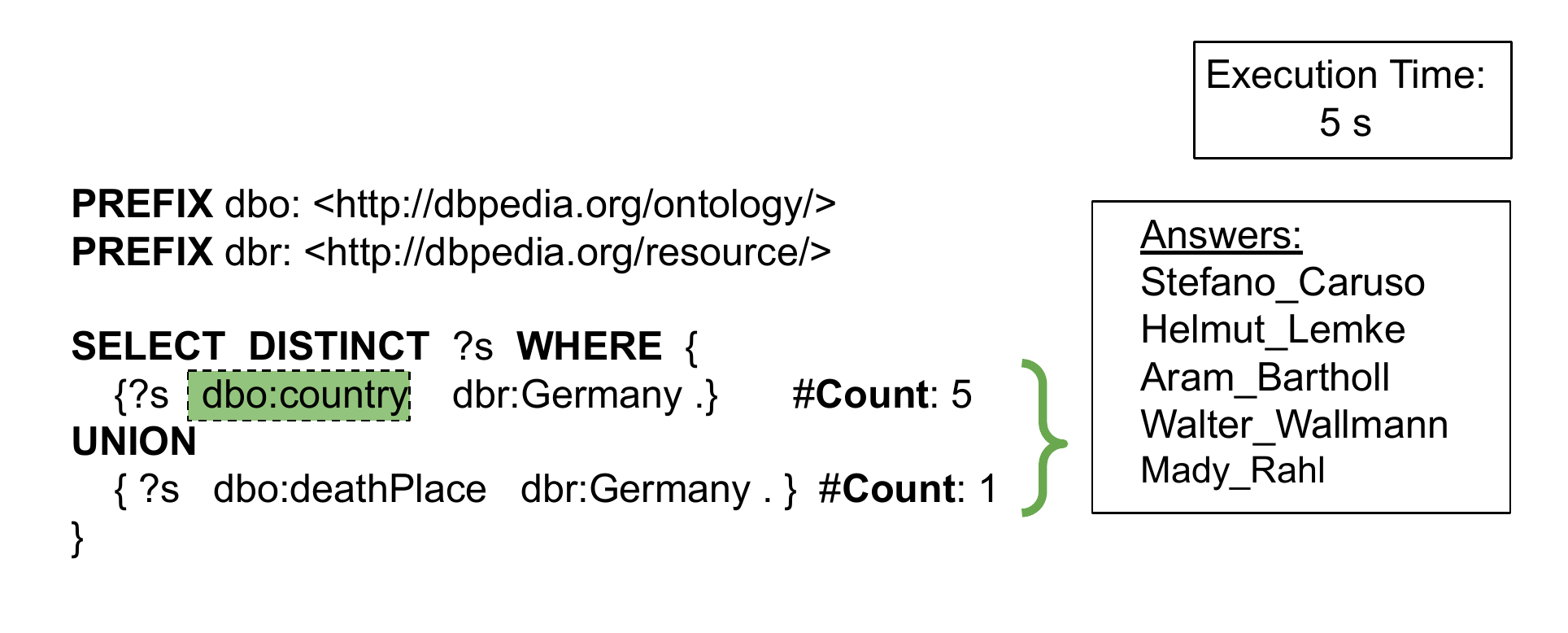}
     \label{fig:3b}}
     
\caption{{\bfseries Example of Summarized RDF.} (a) An RDF graph by considering only one of the most similar predicates returns five answers in 5 seconds. Simplify returning the same answers without being aware of schema and all predicates in less time by reducing the number of triples in RDF dataset; (b) Retrieve the same answers in less time compared with original RDF graph by applying the transformed SPARQL query as a simple one over the summarized RDF graph.}
 \label{Simplequery}
\end{figure*}

\section{Research problem and proposed approach}
\label{sec:approach}
In this section, we discuss the problem of query processing over summarized RDF graphs. Also, we introduce important preliminary definitions and provide the solution by proposing two approaches, naive and optimized, for summarizing RDF graphs without losing necessary information.

\subsection{Problem statement}
Summarization techniques minimize the size of RDF graphs, which helps optimize query processing. Meanwhile, preserving all needed information should be considered during summarization process. Summarizing graph based on a semantic similarity measure is a technique to solve query processing over large RDF graphs. This section presents techniques that exploit knowledge encoded in RDF graphs, similarity measures, and SPARQL queries; they generate summarized graphs against which queries can be processed. To illustrate the relevance of determining relatedness using similarity measures, consider the RDF graph in Figure \ref{fig:1b}; it summarizes the edges of an RDF graph in Figure \ref{fig:1a} that are related or similar. A similarity measure is a function that given two entities associates a value in the range [0.1] that indicates a degree of relatedness of the input entities. Similarity measures rely on various properties of the input entities to estimate similarity. 

\subsection{Preliminaries}
Given two classes $C\textsubscript{1}$=\{$e\textsubscript{11}$,$e\textsubscript{12}$, ...,$e\textsubscript{1n}$\} and $C\textsubscript{2}$=\{$e\textsubscript{21}$,$e\textsubscript{22}$, ...,$e\textsubscript{2m}$\} where $e\textsubscript{1i}$ and $e\textsubscript{2j}$ are entities in classes $C\textsubscript{1}$ and $C\textsubscript{2}$, respectively. The semantic similarity between these two classes is defined as $sim(C\textsubscript{1},C\textsubscript{2})$=$Avg(sem\_sim(e\textsubscript{1i},e\textsubscript{2j}))$ for the set of all pairs of $(e\textsubscript{1i},e\textsubscript{2j})$ in $C\textsubscript{1}$ X $C\textsubscript{2}$ and $e\textsubscript{1i}$ !=$e\textsubscript{2j}$. Therefore, two classes are similar to each other if and only if a set of entities in class $C\textsubscript{1}$ are similar to a set of entities in class $C\textsubscript{2}$ (\cite{semanticsimilarity}). Also, these entities should not be the same. The value of similarity is equal to the average of the semantic similarity value of all entity pairs.  

One of the metrics to measure the similarity is cosine similarity; it calculates the similarity between two n-dimensional vectors by looking for a cosine value from the angle between two vectors. The entities in classes are converted to the vectors by a model to calculate the angle between them. The value of cosine similarity is between 0 and 1. If the value is closer to 1, it means entities are more similar to each other. And if the value is closer to 0, it means the similarity between entities is less. In the following, the formula shows the semantic similarity between sets of entities ($e\textsubscript{1i}$ and $e\textsubscript{2j}$): 

\begin{equation}
sem\_sim(e\textsubscript{1i},e\textsubscript{2j}) = cos(\theta) = \dfrac{e\textsubscript{1i}.e\textsubscript{2j}}{|e\textsubscript{1i}||e\textsubscript{2j}|}   \end{equation}

There are many embedding models can be used to measure semantic similarity and relatedness by the cosine between the concepts’ embedding vectors. Some of these methods focus on the terms called word embedding, and some focus on the relations known as graph embedding.
Word2Vec presented by \cite{wordvec} as a word embedding model is used in this work to generate concept sentence embeddings based on the terms. Word2Vec model transforms words into low-dimensional word embeddings; it resorts to small neural networks to calculate these word embeddings based on contextual knowledge encoded in public background knowledge bases. 
In addition, to compute word embeddings, the Word2Vec model calculates the cosine of the angle between these low-dimensional vectors that represent these embeddings. Word2Vec model resorts to the cosine similarity as a measure to find similar words. In order to find similar words, a trained Word2Vec model based on gensim library\footnote{\url{https://radimrehurek.com/gensim/models/word2vec.html}} can be used. The main precondition of word embedding is that words with similar meaning should have a similar representation. There are many entities and relations that are semantically similar, but they are represented differently in the knowledge graph. Thus, considering context by additional training data is an important task to generate contextualized word embedding.

In the running example, Word2Vec is utilized to determine that the properties \textit{country}, \textit{nationality}, and \textit{birthPlace} are related and similar. Similarity values can guide the summary of properties in RDF graphs and allow for the transformation of SPARQL queries. \autoref{Simplequery} illustrates a transformed SPARQL query. Albeit simpler, the transformed SPARQL query can retrieve the same results as the original query, but in less time. Meanwhile, there are a number of applications where data are represented in the form of graphs, which required graph embedding.

RDF2Vec presented by \cite{RDF2Vec} as a graph embedding model is applied to learn the context of the relations. In the case of RDF knowledge graphs, entities and relations between entities are considered instead of word sequences. First, the graph data is converted into sequences of entities; it can be considered as sentences using two different approaches, i.e., graph walks and Weisfeiler-Lehman (WL) subtree RDF graph kernels. Using those sentences, RDF2Vec trains the same neural language models to represent each entity in the RDF graph as a vector of numerical values in a latent feature space.

Built on existing results on graph embeddings and summarization, we propose two approaches for summarizing RDF graphs. 
The first called Grouping Based Summarization (GBS) approach; it summarizes the RDF graph based on grouping subjects with the same predicates and objects.
The second one, optimized for the first one, called Query Based Summarization (QBS) considers only the part of the RDF graph which is related to the SPARQL query.
In the next, GBS and QBS are defined in detail. 

\subsection{A naive approach for summarizing RDF graphs}
\label{sub:twophase}
A Grouping Based Summarization (GBS) approach able to reduce size of RDF graph represents our naive method. 
GBS works into phases: \textit{Offline Phase} and \textit{Online Phase}. In \autoref{fig:sg-architec} both phases are shown. The input of the offline phase is an RDF dataset that has been loaded as an RDF graph, and the output is a semantic summary graph. For the online phase, the input is a generated summary graph from the offline phase, a SPARQL query, and a semantic similarity metric and the output is a transformation of the query to the one with fewer triple patterns with final answers. 

\begin{figure*}[t]
    \centering
    \includegraphics[width=1\columnwidth]{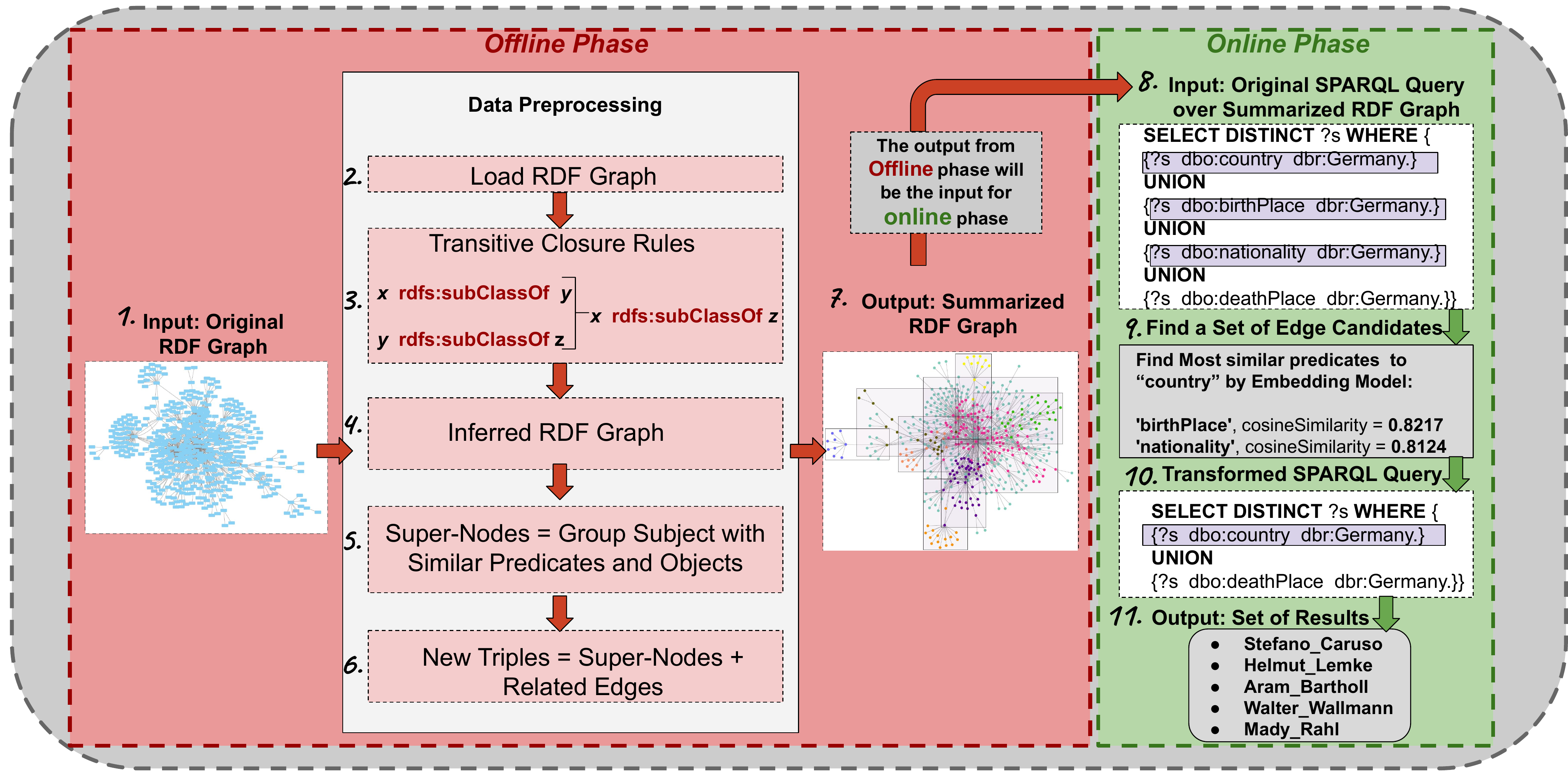}
    \caption{Proposed Summary Graph Architecture for the Grouping Based Summarization (GBS) approach in two phases (The Naive Approach).}
    \label{fig:sg-architec}
\end{figure*}

\begin{algorithm}
% \begin{algorithmic}
\SetAlgoLined
\SetKwInOut{Input}{Input}
\SetKwInOut{Output}{Output}
\Input{RDF dataset has been loaded as RDF graph (G)}
\Output{Semantic Summary Graph (G')}
\BlankLine
\nl G $\gets$ RDFGraphLoader.loadFromDisk(spark, input, parallelism) \;

\nl inferredGraph $\gets$ TransitiveReasoner.apply(G) \;

\nl SN $\gets$ RDD[(List[Subjects])] from inferredGraph where Subjects have similar Predicates and Objects \;  /*SN is a list of Super-Nodes*/

\nl newTriples $\gets$ Triple.create(SN, Predicate, Object) \;

\nl Buffer $\gets$ new ArrayBuffer(triple.length) \;

\nl \ForEach{(SN, Predicate, Object) $\in$ inferredGraph}{
\nl    \If{newTriples not exists in Buffer}{
\nl Buffer += newTriples \;
}
}

\nl G' $\gets$ Buffer \;

\nl \Return G'
\caption{The Grouping Based Summarization (GBS) Algorithm, \textit{Offline phase-Summarizing Graph}}\label{algo1}

% \end{algorithmic}
\end{algorithm}
In the offline phase presented in Algorithm~\ref{algo1}, data should be preprocessed. All edges and vertices are read from the RDF graph. As discussed before, the aim is to provide a semantic summary graph without losing the needed information. After receiving a dataset as an input in \textit{(step 1)}, the RDF graph is loaded \textit{(step 2)}. Then, the RDF graph is expanded to find new relations which are already not available in the original RDF graph, but they have similar semantic meanings. This extension is performed by computing the transitive closure of the properties in the RDF graph. 

Inference layer is used in order to extract new knowledge. For inferring the new facts from the current knowledge bases, inference rules are applied. Transitive Closure (TC) is one of the inference rules which is exerted in this work to infer more facts. It is deployed in the graph in order to find more facts which not exist in the original one.

For example, in the given original RDF graph, there is no relation between entities \textit{Germany} and \textit{Country}. As seen in \autoref{fig:tc}, by applying TC inference rule, the new triple $\langle$\textit{Germany}, \textit{type}, \textit{Country}$\rangle$ has been inferred out of existing triples $\langle$\textit{Germany}, \textit{type}, \textit{EuropeanCountry}$\rangle$ and 
$\langle$\textit{EuropeanCountry}, \textit{subClassOf}, \textit{Country}$\rangle$. Since the original RDF graph has expanded by applying inference rules with more triples, queries over \textit{Country} and \textit{Germany} can be equally answered.

\begin{figure*}[t]
    \centering
    \includegraphics[width=1\columnwidth]{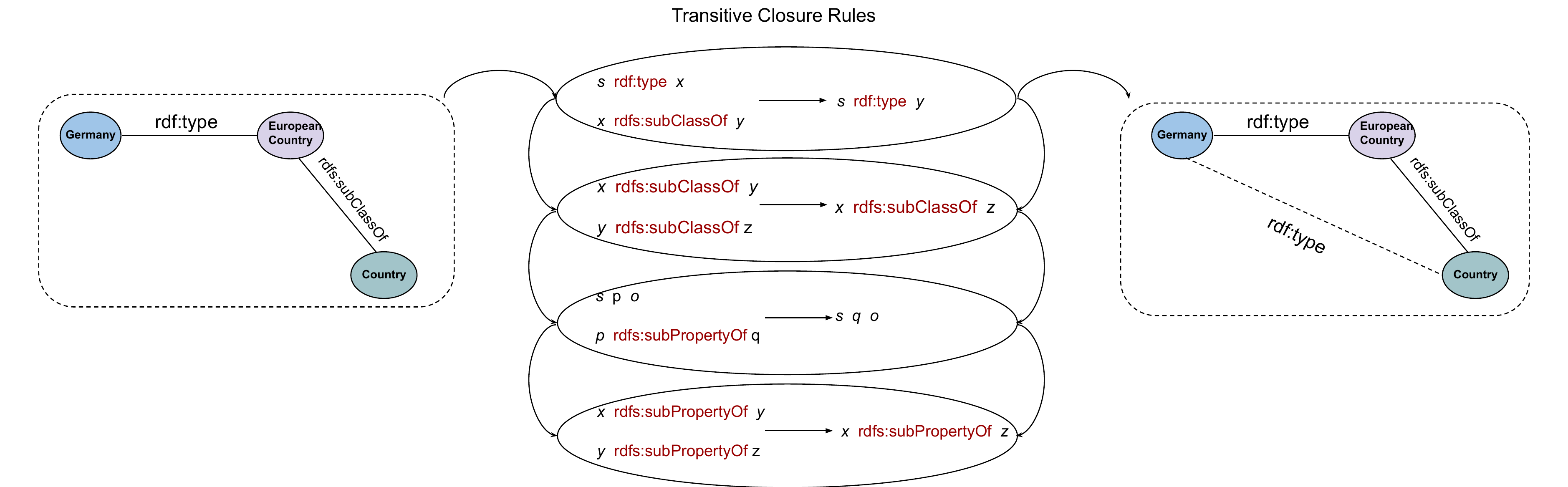}
    \caption{Example of Transitive Closure (TC) rule in Inference Layer}
    \label{fig:tc}
\end{figure*}

The Transitive Closure (TC) inference rule is deployed in the graph in order to find more facts \textit{(step 3)} and 
for expanding the RDF graph. In \textit{(step 4)}, the inferred RDF graph is generated with the new properties. 
Subjects with similar predicates and objects need to be identified to find Super-Nodes (SN). In order to store the results from massive data, in-memory Spark Resilient Distributed Dataset (RDD) is used \textit{(step 5)}.

The Resilient Distributed Dataset (RDD) is the core of Apache Spark~\footnote{\url{https://spark.apache.org}}. As it comes from the name, RDD is a resilient, distributed, and immutable collection of data that are partitioned over a cluster of machines. In Spark RDD, a cluster of workers is connected to a driver or master node. A master node will take care of work execution while worker nodes execute the jobs which are split and then distributed to them. 
In Sparklify, RDF graphs are stored and modeled based on Spark RDD through fast processing for efficient evaluation of SPARQL queries over distributed RDF datasets. 

\begin{figure*}[t!]
    \centering
     \vspace{0pt}\subfloat[Original RDF graph]{
      \includegraphics[width=.38\textwidth]{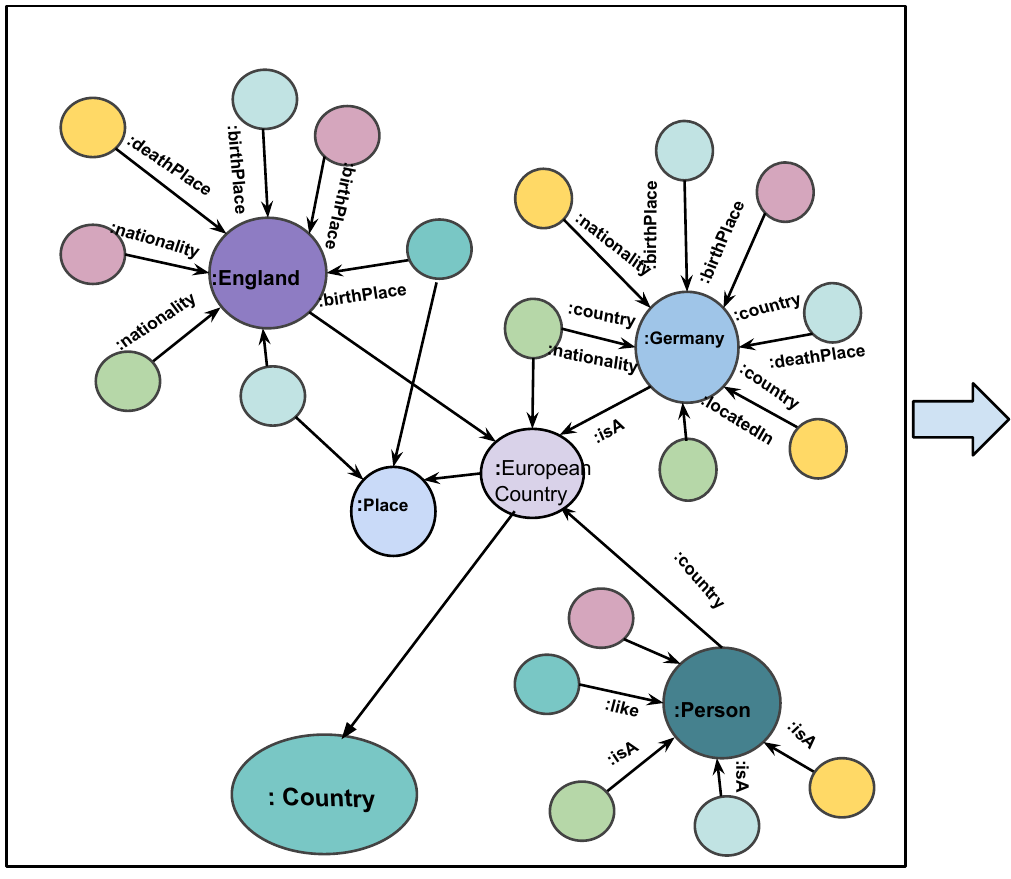}
      \label{fig:GBSexampleOff-a}}
      \vspace{0pt}\subfloat[Inferred graph by \newline finding new facts]{
      \includegraphics[width=.4\textwidth]{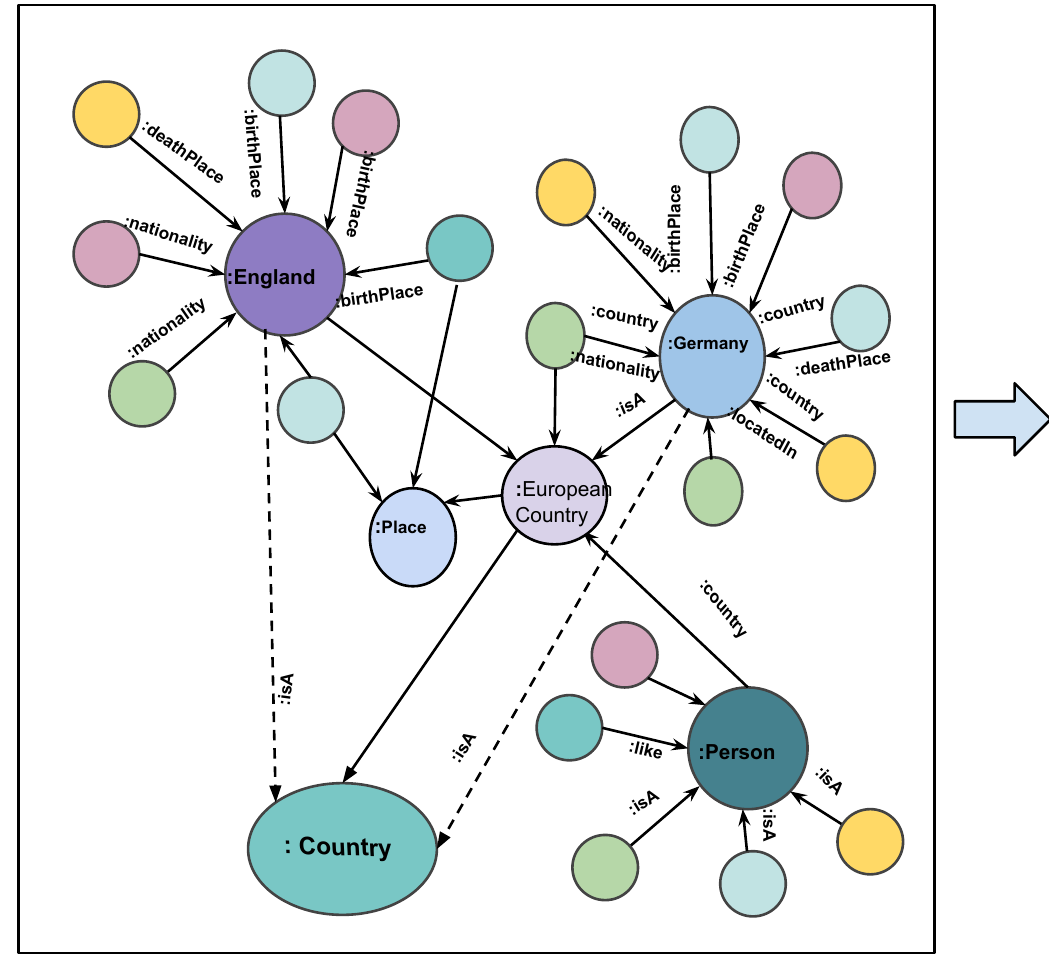}
      \label{fig:GBSexampleOff-b}}
      \hspace{0pt}\subfloat[Group subjects with the same \newline predicate and object using RDD Spark]{
      \includegraphics[width=.45\textwidth]{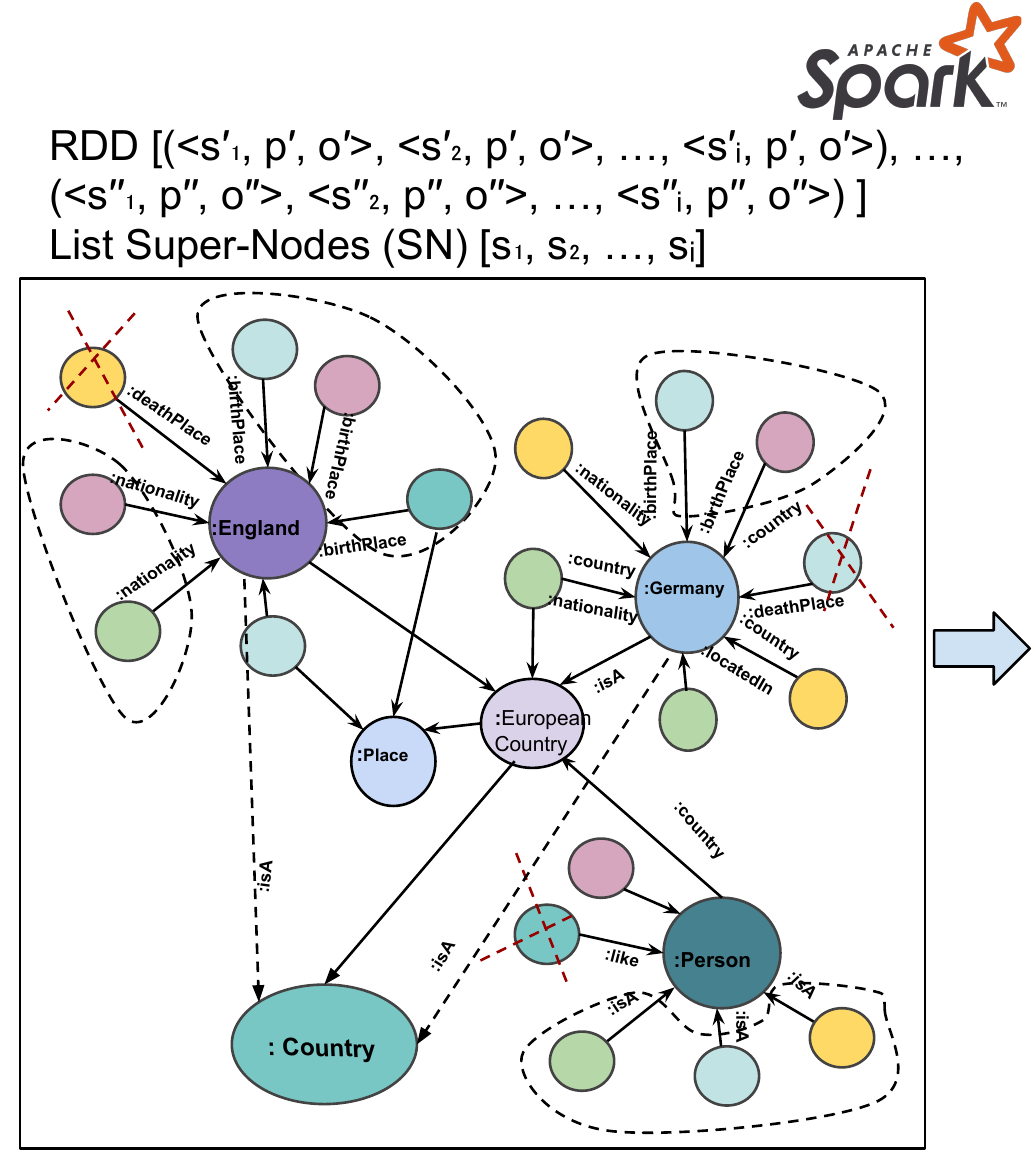}
      \label{fig:GBSexampleOff-c}}
      \vspace{0pt}\subfloat[Summarized RDF graph]{
      \includegraphics[width=.35\textwidth]{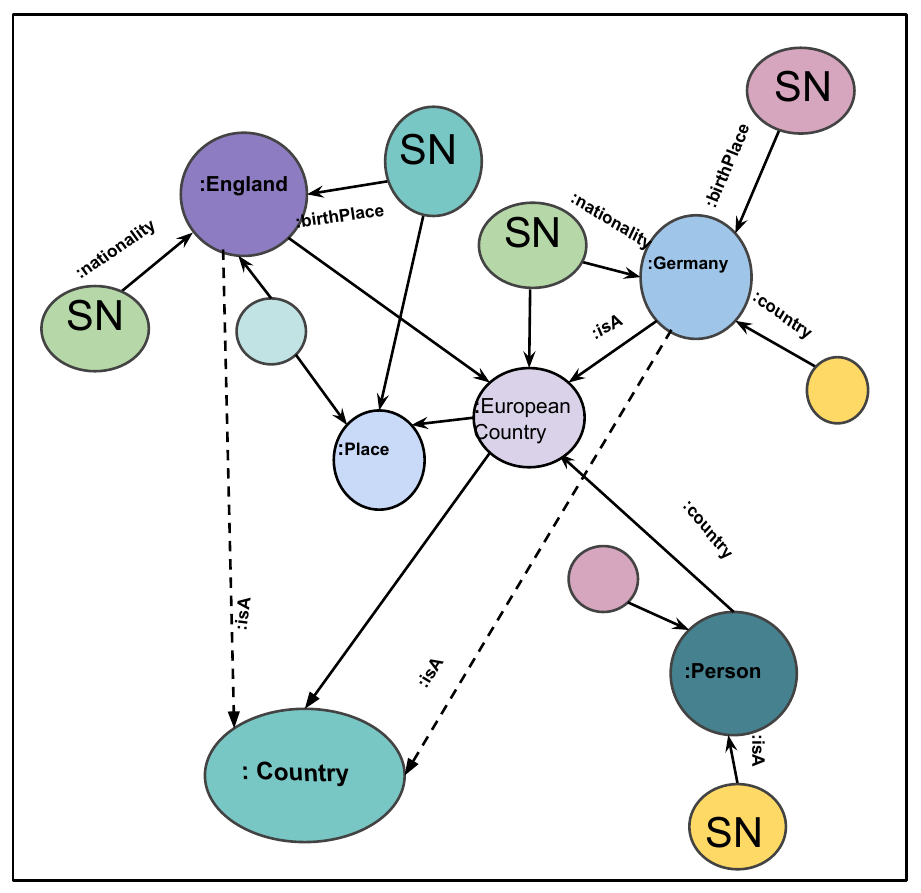}
      \label{fig:GBSexampleOff-d}}
    \caption{{\bf Example of Grouping Based Summarization (GBS) approach (offline phase).} (a) A part of original RDF graph; (b) Graph expanded using the Transitive Closure rule to find new facts which do not exist in the original RDF graph, but they are semantically true; (c) Subjects with the same predicate and object are grouped to find Super-Nodes (SN) to create new triples using RDD Spark, also the nodes that do not share the same predicate and object will be removed; (d) Generate a summary RDF graph with fewer nodes and edges. }
    \label{fig:GBSexample-off}
\end{figure*}

All subjects which have common predicates and objects are grouped. RDDs of (key,value) pairs are used like $(Predicate, Object, List(Subjects))$; this resembles the technique proposed by \cite{consens2015s+}; it also finds the same nodes by clustering entities with the same type. Pair RDDs display operations such as $reduceByKey()$ and $groupByKey()$ for combining and grouping values with the same key. Each of these RDDs of pairs can be considered as a Sub-Graph. Also, subjects in the group list are considered as Super-Nodes (SN). 
\autoref{fig:GBSexample-off} shows with a simple example how a summary RDF graph has been generated by GBS algorithm in offline phase.
For example, all people who were born in Germany can be grouped and consider as a Super-Node (SN). 
New triples are created out of these Super-Nodes with their related edges and added to the buffer\textit{(step 6)}. So, a summarized RDF graph is generated from the original one. 
In general, the total number of edges and vertices of this summary graph is less than the original RDF graph. Therefore, the summarized graph is created \textit{(step 7)}.

After generating a summary RDF graph, the aim is to have a complete set of answers corresponding to a reduced number of triple patterns in the query. 
In the online phase, a multi triple pattern query is processed in \textit{(step 8)} to find a set of edge candidates based on the embedding model such as Word2Vec \textit{(step 9)}. As explained in \autoref{sec:approach}, cosine similarity as a measure is used to find similar predicates by their distance from each other. Also, a trained model based on gensim library has been used. In our model, semantically similar predicates tend to lie close to each other. For example, the cosine similarity value between a given predicate $country$ and predicates $nationality$, $birthPlace$, and $deathPlace$ is $0.8217$, $0.8124$, and $0.3672$, respectively. In \textit{(step 9)}, the edges which have a higher similarity value than a given threshold (> $0.5$) are selected. Therefore, predicate $deathPlace$ cannot be considered as a similar predicate to $country$. The similar edges are considered as strong relations between vertices and are called Super-Edges.
As seen in Algorithm~\ref{algo2}, the Super-Edges discovered are used in transforming the query to the simple one to find complete results in \textit{(step 10)}. 
Word embedding techniques consider similarity between edges based on their distance. Thus, in summarized RDF graphs where the size of graph is smaller and predicates are closer to each other, there is a possibility that a founded predicate is similar to the others, not only in terms of distance but also from the semantic point of view. A simple example in \autoref{fig:GBSexample-on} demonstrates how the algorithm of GBS approach works in online phase.

\begin{algorithm}%[H]
\SetAlgoLined
\SetKwInOut{Input}{Input}
\SetKwInOut{Output}{Output}
\Input{SPARQL Query (Q); Summary Graph (G') from Algorithm~\ref{algo1}; List of Vocabulary (V); Semantic Similarity Metric}
\Output{Transformed SPARQL Query (Q'); List of result}
\BlankLine
\nl initialize training model \;
\nl V $\gets$ list\_of\_vocabulary \;
\nl \ForEach{vocabulary $\in$ V}{
\nl model $\gets$ trained model \; }

\nl Q.Predicates $\gets$ set of predicates extracted from Q \;

\nl \ForAll{p $\in$ Q.Predicates}{
\nl similar\_set\_of\_P $\gets$ predicates q in G' with cosine-similarity(q,P) > 0.5 \;
\nl Q.replaceBy(P, representativeOf(synonym\_set\_of\_P)) \;
}
\nl Q' $\gets$ Q \;
\nl result $\gets$ G'.sparql(Q') \; 
\nl \Return result

\caption{The Grouping Based Summarization (GBS) Algorithm, \textit{Online Phase-Query Rewriting}}\label{algo2}
\end{algorithm}

\begin{figure*}[t]
    \centering
      \vspace{0pt}\subfloat[Summary graph, a SPARQL \newline query, and vocabulary of graph \newline predicates]{
      \includegraphics[width=.47\textwidth]{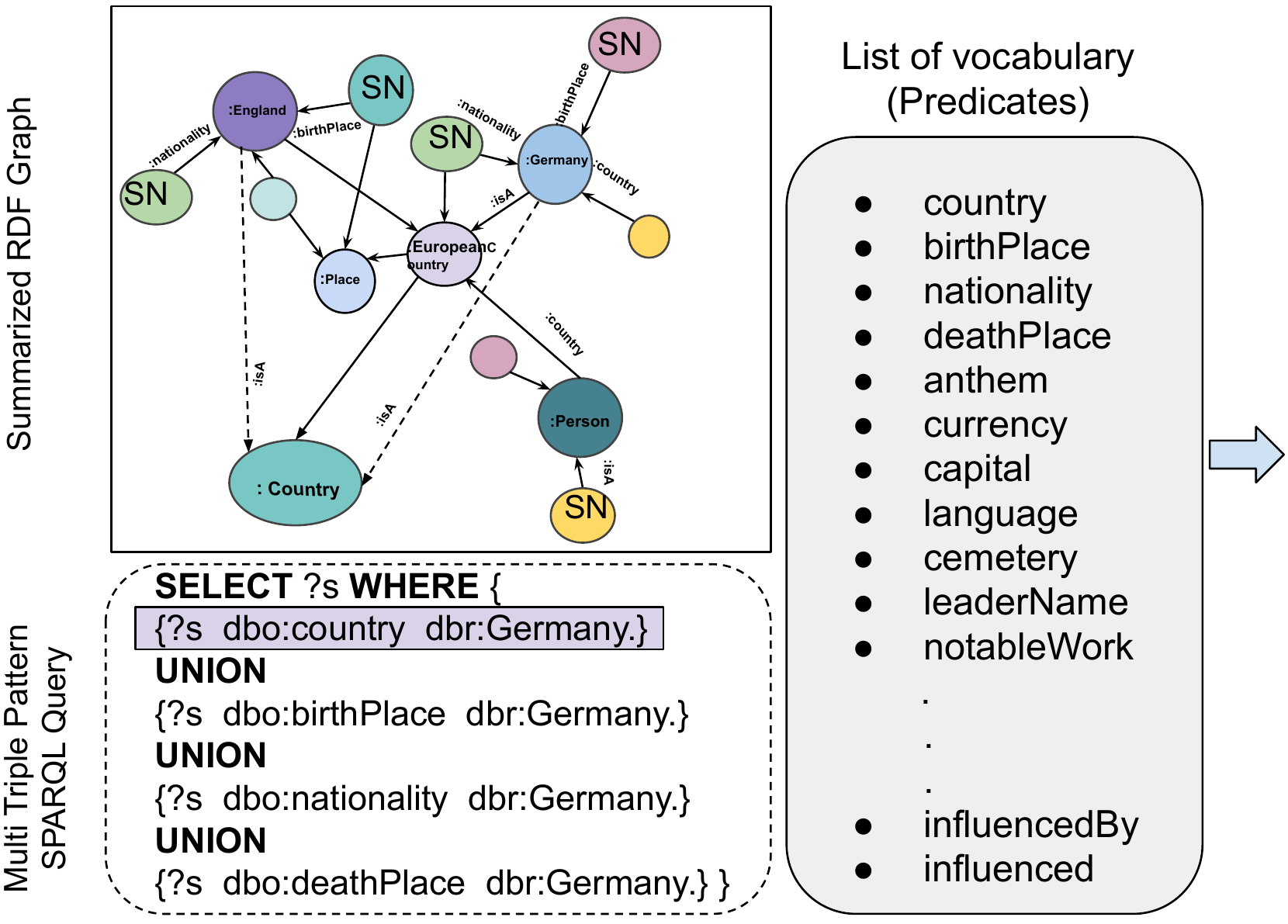}
      \label{fig:GBSexampleOn-a}}
      \vspace{0pt}\subfloat[Find similar predicates  \newline to transform the query]{
      \includegraphics[width=.45\textwidth]{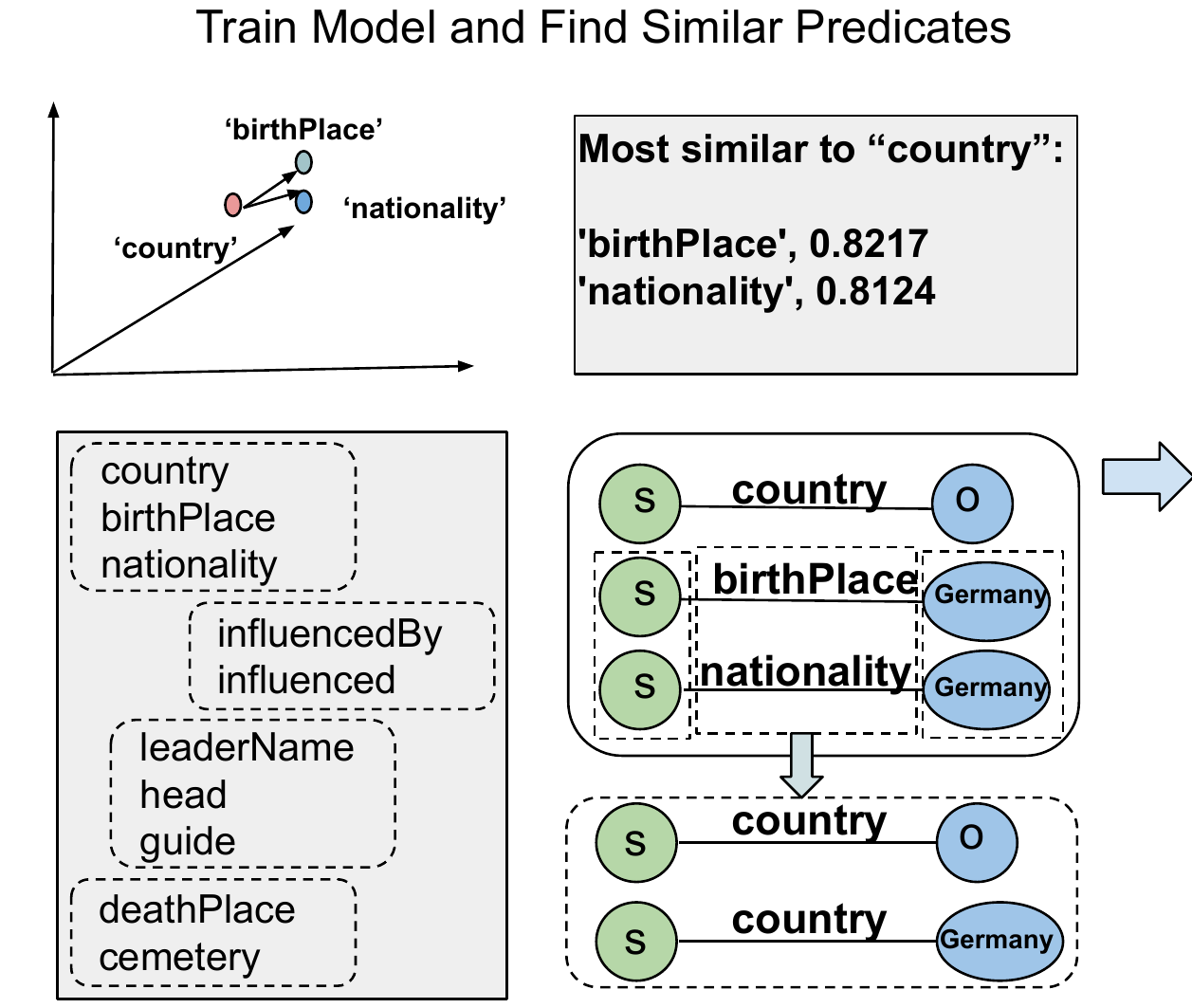}
      \label{fig:GBSexampleOn-b}}
      \hspace{0pt}\subfloat[A transformed query and answers]{
      \includegraphics[width=.6\textwidth]{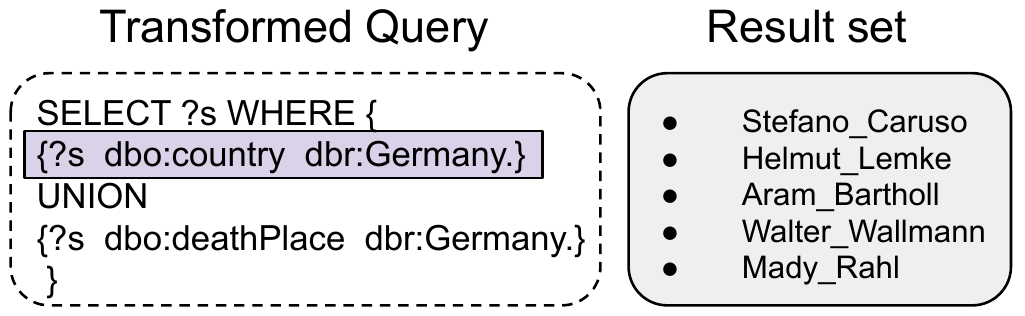}
      \label{fig:GBSexampleOn-c}}
    \caption{{\bf Example of Grouping Based Summarization (GBS) approach (online phase).} (a) The summary RDF graph generated from offline phase with a multi triple pattern SPARQL query and a list of vocabulary before training the model; (b) Apply both word embedding model and graph embedding model to find most similar predicates to transform multi triple pattern queries to simple ones by considering the most similar predicate as a Super-Edge; (c) Transformed query with results. }
    \label{fig:GBSexample-on}
\end{figure*}

In \textit{(step 11)}, final answers are generated by getting help from a query engine, e.g., Sparklify, over the summary RDF graph. After retrieving answers, it is observed that some required information is lost. In \autoref{sec:evaluation}, the results of evaluation show that querying over summarized RDF graph in GBS approach for large RDF graphs returns less number of answers compared with querying over original RDF graph. In order to avoid the problem of losing information during query processing, we propose  the Query Based Summarization (QBS) approach. 

\begin{algorithm}[t!]
\SetAlgoLined
\SetKwInOut{Input}{Input}
\SetKwInOut{Output}{Output}
\Input{RDF dataset loaded as RDF graph (G); SPARQL Query (Q); List of Vocabulary (V); Semantic Similarity Metric}
\Output{Transformed SPARQL Query (Q''); Summary Graph (G''); List of result}
\BlankLine
\nl G $\gets$ RDFGraphLoader.loadFromDisk(spark, input, parallelism) \;

\nl Super-Predicates $\gets$ Q.getSetOfPredicates \;

\nl Super-Objects $\gets$ Q.getSetOfObjects \;

\nl g $\gets$ triples includes Super-Predicates or Super-Objects \; /*g is a Sub-Graph*/

\nl initialize training model \;
\nl V $\gets$ list\_of\_vocabulary \;
\nl \ForEach{vocabulary $\in$ V}{
\nl model $\gets$ trained model \; }

\nl Q.Predicates $\gets$ set of predicates extracted from Q \;

\nl \ForAll{p $\in$ Q.Predicates}{
    \nl similar\_set\_of\_P $\gets$ predicates q in g with cosine-similarity(q,P) > 0.5 \;
    \nl Q.replaceBy(P, representativeOf(synonym\_set\_of\_P)) \;
    }

\nl Q'' $\gets$ Q \;
\nl tsp $\gets$ G.getSetOfPredicates.getURI.contains(similar\_set\_of\_P) \; /*tsp is a set of triples contain similar predicates*/ 

\nl Super-Subjects $\gets$  tsp.getSetOfSubjects \;

\nl newTriples $\gets$ Triple.create(Super-Subjects, Super-Predicates, Super-Objects) \;

\nl Buffer $\gets$ new ArrayBuffer(triple.length) \;

\nl \ForEach{(Super-Subject, Super-Predicate, Super-Object) $\in$ G}{
\nl    \If{newTriples not exists in Buffer}{
\nl Buffer += newTriples \;
}
}

\nl G'' $\gets$  Buffer.union(g) \;

\nl result $\gets$ G''.sparql(Q'') \; 

\nl \Return result
\caption{The Query Based Summarization (QBS) Algorithm}\label{algo3}
\end{algorithm}

\begin{figure*}[t]
    \centering
    \includegraphics[width=1\columnwidth]{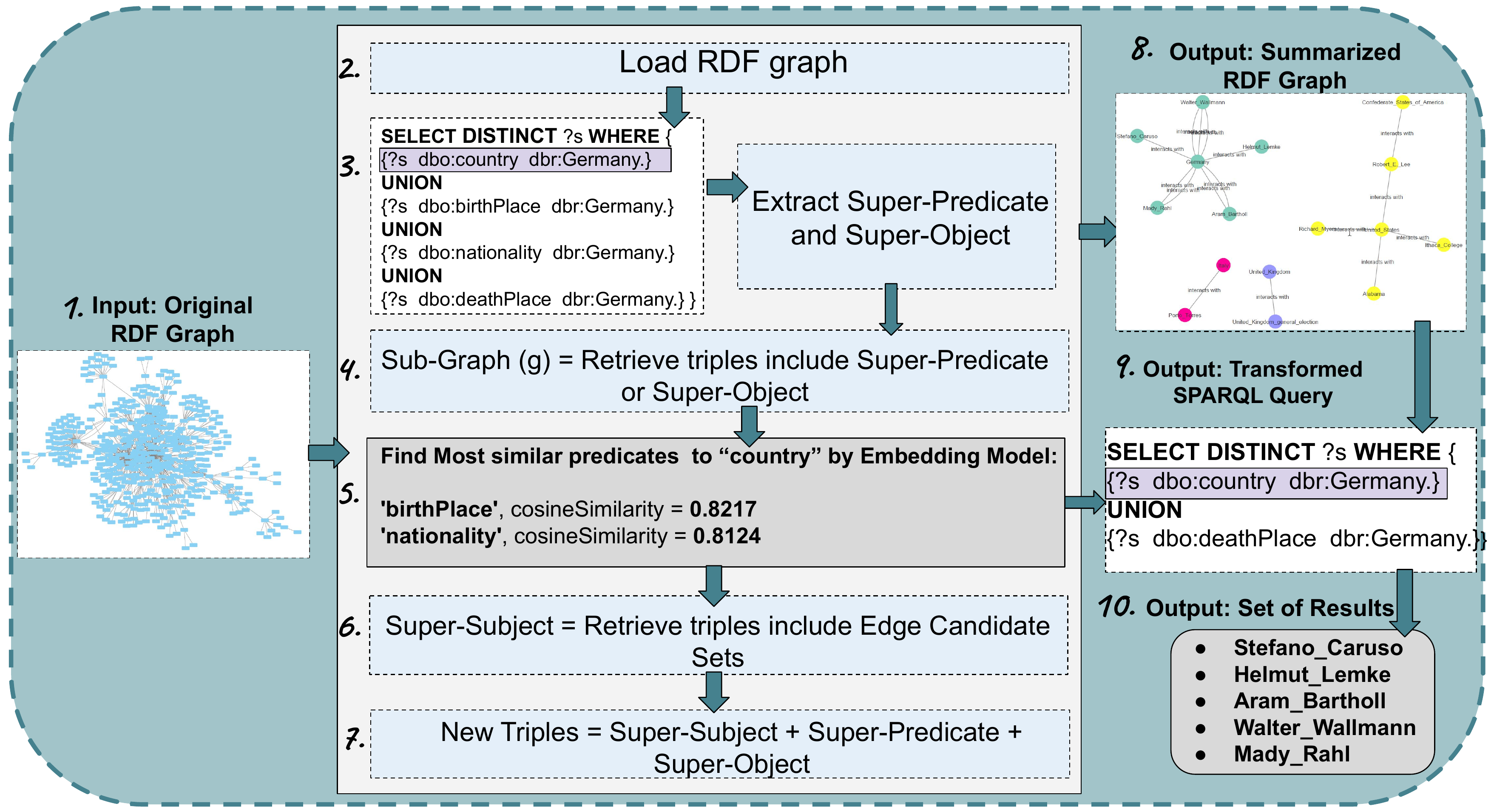}
    \caption{Summary Graph Architecture for the Query Based Summarization (QBS) approach (The Optimized Approach).}
    \label{fig:sg-architec2}
\end{figure*}

\subsection{An optimized approach for summarizing RDF graphs}

The optimized approach in order to reduce size of RDF graphs is Query Based Summarization (QBS) approach. QBS guides the summarization process based on an input query. 
Thus, instead of considering a whole RDF graph to summarize, only the part of the original RDF graph which is related to the user query will be considered. This idea is inspired by \cite{ASSG}. In this way, not only the execution time is reduced, but also all the necessary information is preserved. The input of this algorithm is an RDF dataset that has been loaded as an RDF graph, a SPARQL query, and a semantic similarity metric. The output is a transformation of the query to the simple one with fewer triple patterns and a semantic summary graph with final answers; it is presented in Algorithm~\ref{algo3}. After receiving an RDF graph as input and load it \textit{(step 1-2)}, the query of the user should be collected to extract predicates and objects related to the query as Super-Predicates and Super-Objects \textit{(step 3)}. In this approach, contrary to GBS, the inferred graph is not generated due to the time-consuming and considering small part of graph related to the query. Therefore, the RDF graph is not expanded like in GBS approach. 

After extracting the Super-Predicates and Super-Objects in \textit{(step 3)}, a Sub-Graph ($g$) consisting of triples with predicates equal to the Super-Predicates or with objects equal to the Super-Objects is generated \textit{(step 4)}. The architecture of Query Based Summarization (QBS) is illustrated in \autoref{fig:sg-architec2}.
In the next step, word and graph embedding models are applied to this Sub-Graph to find edge candidate sets. Unlike the GBS approach, the embedding models consider the Sub-Graph instead of the whole graph. By this method, only relevant predicates will be found as similar \textit{(step 5)}. Indeed, embedding models help that query is transformed into a simple SPARQL query. It also helps that Super-Subjects are found by extracting triples with the predicates equal to the edge candidate sets \textit{(step 6)}. By having Super-Subjects, Super-Predicates, and Super-Objects from the previous steps, new triples are created in \textit{(step 7)} and added to our Sub-Graph ($g$)  to generate the final graph as a summary graph \textit{(step 8)}. The Super-Edges discovered earlier are used in transforming the query to the simple one \textit{(step 9)}. 

Since QBS relies on the query of users, the generated summary graph contains all information related to the query. Therefore, querying the transformed query over the summary graph returns all possible answers. This is proved by Theorem~\ref{theo2}. By applying the simple query over this summary graph, the answers are retrieved \textit{(step 10)}.

\begin{theorem}
\label{theo2}
If $C\textsubscript{1}$ and $C\textsubscript{2}$ are classes in the RDF graph $G$, and $C\textsubscript{1}$ is the domain of $p\textsubscript{1}$ and $C\textsubscript{2}$ is the domain of $p\textsubscript{2}$. Also, $p\textsubscript{1}$ is similar to $p\textsubscript{2}$ according to a given \textbf{semantic similarity metric}. Let $G''$ be the compact representation of $G$ by QBS approach, where $p$ is the property used to represent $p\textsubscript{1}$ and $p\textsubscript{2}$ in $G''$. 
The following properties hold: 

\begin{enumerate}
\item The cardinality of $G''$ and cardinality of $G$ are the same.
\begin{equation}
\textit{cardinality($G''$) = cardinality($G$)}
\end{equation}
\item For SPARQL query $Q$, where $p\textsubscript{1}$ or $p\textsubscript{2}$ are used in the triple patterns of $Q$. $Q$ can be a conjunctive query or include the UNION or OPTIONAL operator. If $Q''$ is the transformation of $Q$ where $p\textsubscript{1}$ and $p\textsubscript{2}$ are replaced by $p$, the evaluation $Q'$ in $G'$ and 
the evaluation of $Q$ in $G$ are the same.
\end{enumerate}
\begin{equation}
[[Q]]G= [[Q'']]G''
\end{equation}
\end{theorem}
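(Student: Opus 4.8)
The plan is to establish the two items separately: item~(1) by exhibiting an explicit triple-to-triple correspondence between $G$ and $G''$ and checking it is a bijection, and item~(2) by structural induction on the SPARQL algebra expression of $Q$, first reducing $[[Q]]_G$ to its evaluation over the query-relevant Sub-Graph $g$ built in steps~3--4 of Algorithm~\ref{algo3}, and then pushing the predicate substitution $\{p_1,p_2\}\mapsto p$ through each operator of the $[[\cdot]]$ semantics (cf.\ \cite{PerezAG09}).

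For item~(1), I would define the map $\phi$ on triples that sends $\langle s,p_1,o\rangle$ and $\langle s,p_2,o\rangle$ to $\langle s,p,o\rangle$ and is the identity on every triple whose predicate is neither $p_1$ nor $p_2$. By the construction of $G''$ in steps~14--21 of Algorithm~\ref{algo3} -- the new $p$-triples are created precisely from the subjects and objects linked by $p_1$ or $p_2$, while the rest of the relevant fragment is carried over unchanged inside $g$ -- the map $\phi$ is onto $G''$. It is injective on the part untouched by the substitution because it is the identity there, and on the remaining part it is injective exactly when no pair $(s,o)$ is linked by both $p_1$ and $p_2$; since $p_1$ and $p_2$ have domains $C_1$ and $C_2$, a single subject cannot feed both predicates unless $C_1\cap C_2\neq\emptyset$, so in the absence of that redundancy $\phi$ is a bijection and $\mathit{cardinality}(G'')=\mathit{cardinality}(G)$ follows (counting triples with multiplicity makes the equality unconditional).

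For item~(2), the first step is a \emph{subgraph-completeness} observation: any solution mapping that matches a triple pattern $t$ of $Q$ into $G$ sends $t$ to a triple whose predicate already appears in $Q$ (hence lies in the set \emph{Super-Predicates}) or whose object already appears in $Q$ (hence lies in \emph{Super-Objects}); therefore that image belongs to the Sub-Graph $g$, and as $g\subseteq G$ we obtain $[[Q]]_G=[[Q]]_g$. The second step is induction on the algebra tree of $Q$. In the base case $Q$ is a single triple pattern: if it mentions $p_1$ or $p_2$, replacing that predicate by $p$ and evaluating over $G''$ yields the same mappings, because the construction in step~16 guarantees $\langle a,p,b\rangle\in G''$ iff $\langle a,p_1,b\rangle\in G$ or $\langle a,p_2,b\rangle\in G$; if it mentions any other predicate, $G$ and $G''$ agree on the relevant triples. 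The inductive cases for \emph{And}, \emph{Union}, \emph{Optional} (and \emph{Filter}) are immediate since $[[\cdot]]$ is defined compositionally as join, set union and left join of the sub-results, so the induction hypothesis lifts; the one case that needs care is \emph{Union}, where a branch using $p_1$ and a branch using $p_2$ collapse to the same pattern over $G''$, which is sound exactly because the $p$-edges of $G''$ are the union of the relabeled $p_1$- and $p_2$-edges of $G$. Chaining the two steps gives $[[Q]]_G=[[Q]]_g=[[Q'']]_{G''}$.

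The real obstacle, I expect, is not the induction -- which is routine SPARQL-algebra bookkeeping -- but making the semantics of the construction precise from the operational description in Algorithm~\ref{algo3}: one must pin down that the $p$-labeled edges of $G''$ are \emph{exactly} the subject--object pairs joined by $p_1$ or $p_2$ in $G$ (no spurious edges, none missing) and that no other triple relevant to answering $Q$ is disturbed, which is what both the bijection of item~(1) and the edge-correspondence in the base case of item~(2) rest on. A secondary subtlety is the subgraph-completeness step when $Q$ has a variable in predicate position; I would either restrict the statement to queries with constant predicates or argue that \emph{Super-Predicates} must then be treated conservatively. The extension from two similar predicates to a family $p_1,\dots,p_k$ is obtained by iterating the same argument.
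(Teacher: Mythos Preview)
Your structural-induction argument for item~(2) is considerably more careful than what the paper actually offers: the paper's ``proof'' is a four-sentence restatement of the QBS construction that simply asserts $G''$ contains all relations and entities relevant to $Q''$ and concludes both items from that. There is no bijection, no induction on the SPARQL algebra, and no treatment of \textsc{Union}/\textsc{Optional}; the subtleties you flag (variable-predicate patterns, exactness of the $p$-edge set, the possible collision $C_1\cap C_2\neq\emptyset$) are not raised at all. So for item~(2) your route is genuinely different and strictly more rigorous; what it buys is an actual correctness argument rather than an appeal to the algorithm's intent.

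There is, however, a gap in your treatment of item~(1). You read ``cardinality'' as triple count and try to build a bijection $\phi:G\to G''$, but by steps~4 and~21 of Algorithm~\ref{algo3} the graph $G''$ is $g\cup\text{Buffer}$, where $g$ is only the query-relevant slice of $G$; any triple of $G$ whose predicate and object are both foreign to $Q$ is dropped, so your identity-on-the-rest map sends such triples outside $G''$ and $\phi$ is not even a function into $G''$. In the paper's own usage (see the \emph{Cardinality} metric in \S\ref{sec:evaluation}), ``cardinality'' means the size of the answer multiset, not the number of triples, which makes item~(1) essentially a special case of item~(2) rather than a separate structural claim about $G$ and $G''$. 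Under that reading your bijection is unnecessary; under the triple-count reading the statement is simply false (Table~\ref{table:ky02} reports $G''$ up to $99\%$ smaller than $G$), and no bijection can save it. Either way, the right fix is to drop the $\phi$ argument and derive item~(1) from the base case of your induction for item~(2).
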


\begin{proof}
Consider the Query Based Summarization approach, which is dependent on the query of the user to summarize the RDF graph. All the properties and their relations related to the query appear in $G''$; it is the compact representation of $G$. Thus, $G''$ consists of all relations and entities related to the query $Q''$. Therefore, the properties $p\textsubscript{1}$ and $p\textsubscript{2}$ in $G$ which are similar based on the semantic similarity metric, and they belong to different classes are considered as predicate $p$ in $G''$. Hence, the cardinality of $G''$ is equal to the cardinality of $G$. Again, the query $Q$ with properties of $p\textsubscript{1}$ and $p\textsubscript{2}$ is rewritten to query $Q''$ with the property of $p$, and duplicated triple patterns are eliminated. So, query $Q$ over $G$ is the same as query $Q''$ over $G''$.
\qed
\end{proof}

\begin{figure*}[t!]
    \centering
     \vspace{0pt}\subfloat[RDF graph, a SPARQL query, and \newline vocabulary of graph predicates]{
      \includegraphics[width=.5\textwidth]{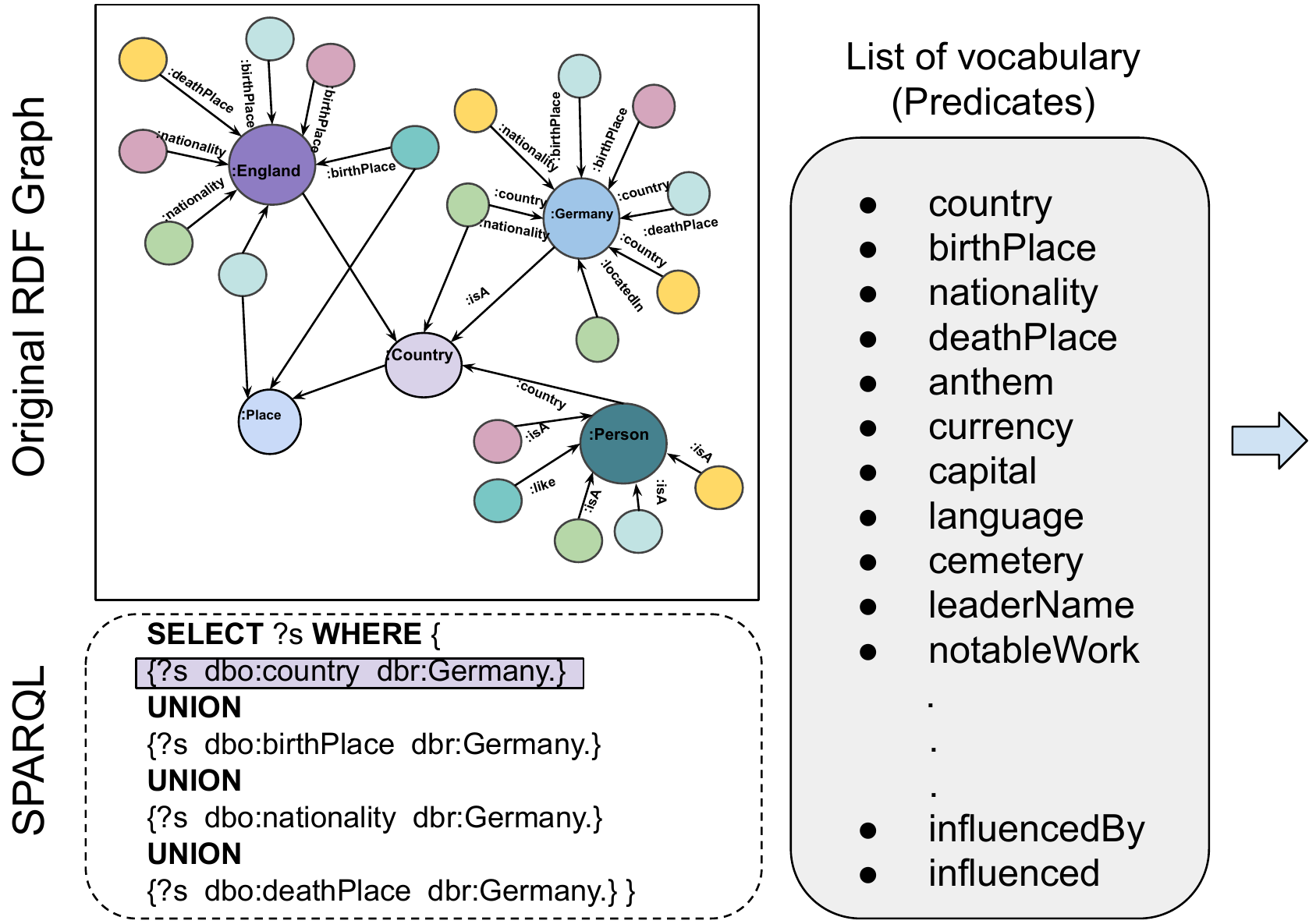}
      \label{fig:QBSexample-a}}
      \hspace{0pt}\subfloat[Sub-Graph ($g$) with selected predicate and object]{
      \includegraphics[width=.28\textwidth]{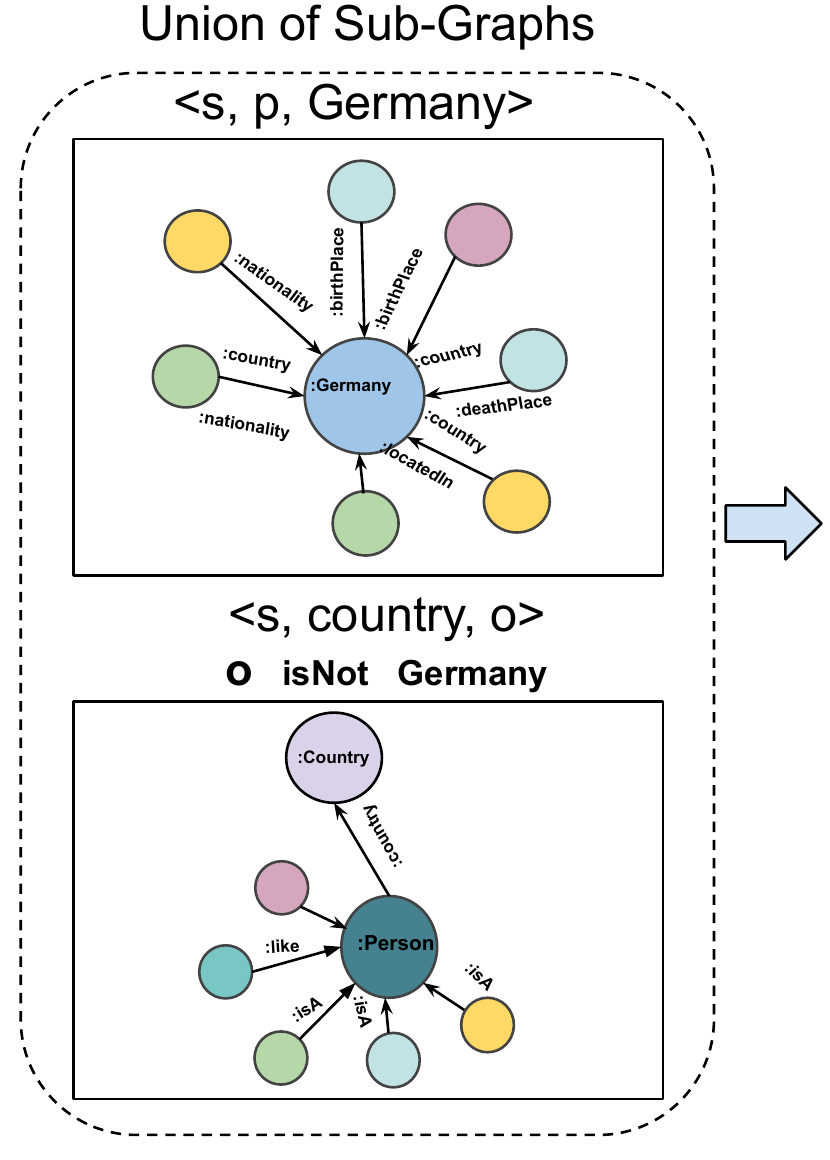}
      \label{fig:QBSexample-c}}
      \hspace{0pt}\subfloat[Add new triples with \newline similar predicates]{
      \includegraphics[width=.45\textwidth]{Fig7-b.pdf}
      \label{fig:QBSexample-d}}
      \vspace{0pt}\subfloat[Summarized RDF graph with \newline transformed query and a result set]{
      \includegraphics[width=.4\textwidth]{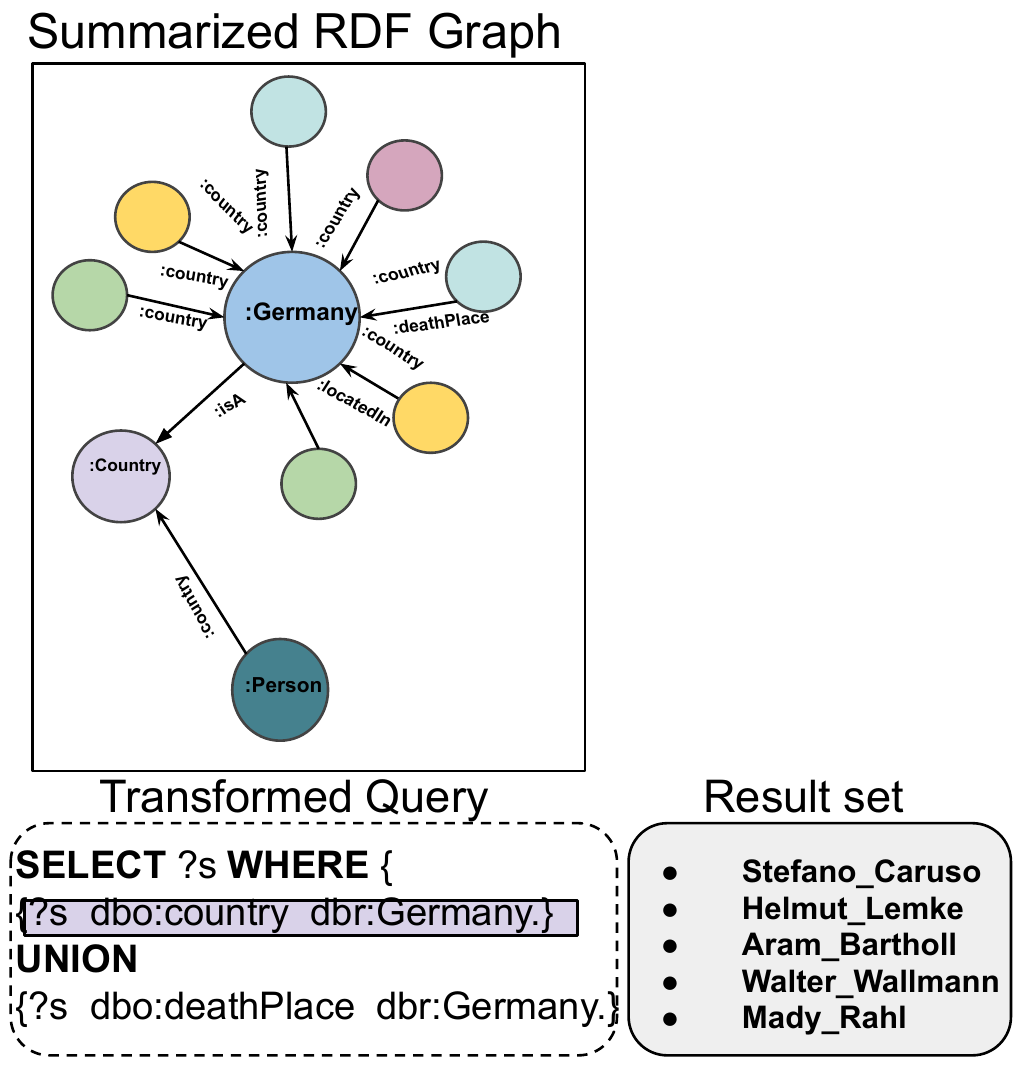}
      \label{fig:QBSexample-e}}
    \caption{{\bf Example of Query Based Summarization (QBS).} (a) The original RDF graph portion and a SPARQL query with predicate vocabulary before training the model; (b) Predicate and object of one triple pattern are considered as Super-Edge and Super-Object, respectively to extract the Sub-Graph ($g$); (c) The embedding models encode predicates as vectors to find the most similar predicates to the Super-Edge trained to create new triples; (d) Summary RDF graph with a transformed query and answers. }
    \label{fig:QBSexample}
\end{figure*}

\autoref{fig:QBSexample} illustrates the QBS algorithm with a simple example. A given query in Figure \autoref{fig:QBSexample-a} can also be considered as a multi triple pattern query without union operation. The number of answers by querying over the original RDF graph is equal to the number of answers by querying over the summarized graph. This has been proved in Theorem~\ref{theo2}.

\section{Empirical evaluation}
\label{sec:evaluation}
The effectiveness of the approaches described in \autoref{sec:approach} is analyzed based on the size of the RDF graph, the number of retrieved data, scalability, and query processing. We aim to answer the following research questions:
\begin{inparaenum}[\bf RQ1\upshape)]
     \item What is the impact of predicate relatedness by analyzing similarity measures in different size of datasets using diverse embedding models on finding complete answers?
    \item Are the proposed RDF summary graph techniques able to reduce the size of the original RDF graph by considering only required triples for querying?
    \item How is the effectiveness of the proposed summarized techniques on answer completeness and cardinality compared with the original RDF graph?
    \item How is the impact of the complexity of query on query processing and execution time in the proposed RDF summary graph compared with the original RDF graph?
\end{inparaenum}

\subsection{Data preparation and methods}
GBS and QBS have been evaluated in terms of compactness, completeness, and execution time. The summarization ratio is calculated by comparing the size of original RDF graphs with the size of summary graphs generated by two summarization approaches. The number of retrieved answers over original graphs is compared with the number of answers by querying over summarized graphs. Finally, the speed of query processing is computed by total time of returning answers by original and summary graphs using Sparklify.
One of the most important aspects of our evaluation is to collect and process the dataset to ensure that it is being worked by our approach. Therefore, the first step is preparing datasets to meet our conditions as an input. One condition is the dataset does not contain any literals. Since literal values and properties coming with literal values cannot be embedded by knowledge graph embeddings techniques. Then, RDF dataset should be loaded as an RDF graph. Later, the complex query is evaluated against this graph, and the answers are returned by Sparklify. The number of results and the running time are measured. For the proposed approaches, after loading the RDF dataset as an original RDF graph, the summarization technique is applied to generate a summary RDF graph. The number of answers retrieved by querying transformed queries over the summarized graph and query processing time are compared with the results produced over the RDF graph.

\subsection{Experimental setup}
We conduct an experimental study to assess the accuracy of our approach compared with the baseline. Our experimental configuration involves the datasets and queries used for our evaluation, as well as metrics and implementation.

\subsubsection{Datasets and queries} 
Four datasets with different sizes are applied to realize the effectiveness of the mentioned techniques. One is a small part of DBpedia dataset consists of \textit{2,047} triples wherein a bunch of companies, persons, and places with some of their information are stored. The second one is an Entity Summarization BenchMark (ESBM) with \textit{6,584} triples which are sample entities from two datasets, DBpedia and LinkedMDB (\cite{Hassanzadeh2009LinkedMD}) a popular movie database. The other selected datasets are Waterloo SPARQL Diversity Test Suite (WatDiv) with \textit{10,916,457} triples (WatDiv.10M) and with \textit{108,997,714} triples (WatDiv.100M) as medium and large datasets, respectively. Our evaluation is composed of 15 queries selected from QALD-3\footnote{\url{http://qald.aksw.org/index.php?x=task1&q=3}} for DBpedia dataset, from ESBM Benchmark v1.2 \footnote{\url{https://w3id.org/esbm/}} for ESBM, and from Query Generator (v0.6)\footnote{\url{https://dsg.uwaterloo.ca/watdiv/\#download}} for WatDiv.

\subsubsection{Metrics} 
Three main metrics in the evaluation of summary graphs are considered to answer the above research questions. \begin{inparaenum}[\bf a\upshape)]\item \textit{Compactness}: the size of summary RDF graph should be typically smaller than the size of original RDF graphs. The Summarization Ratio (SR) is a metric to show the value of compactness; it is calculated by the number of triples in the summary graph divide into the number of triples in the original RDF graph. \item \textit{Cardinality}: the number of answers returned by a query over the original RDF graph should be the same as over the summarized graph. \item \textit{Execution Time}: the running time for query processing over the summarized RDF graph should be less than over the original RDF graph.
\end{inparaenum}

\subsubsection{Implementation} 
The approaches are implemented in Scala 2.11.12 and Spark 2.2.0 over query engine Sparklify from the SANSA Stack framework. Sparklify executed over the original RDF graph is used as the baseline of our work. The proposed algorithms are compared with the original RDF graph based on Sparklify, \begin{inparaenum}[\bf i\upshape)]\item\textbf{Sparklify+GBS}; and \item\textbf{Sparklify+QBS}. \end{inparaenum} We evaluate our experiments on three servers with 256 cores and executor memory of 100 GB. 

\begin{figure*}[t!]
\centering  
\hspace{0pt}\subfloat[DBpedia]{
     \includegraphics[width=0.5\textwidth]{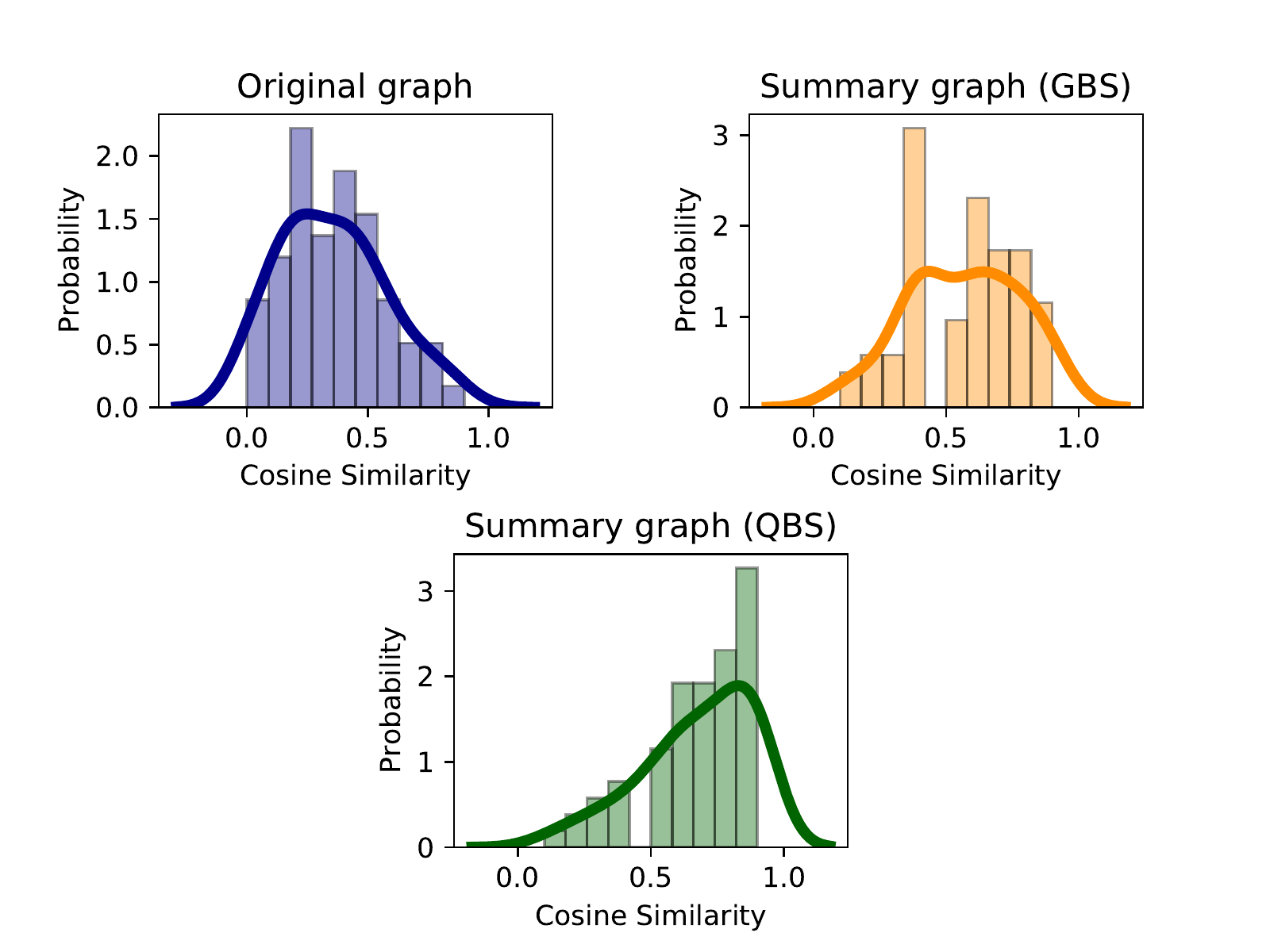}
     \label{fig:distplot1}}
     \vspace{0pt}\subfloat[ESBM]{
     \includegraphics[width=0.5\textwidth]{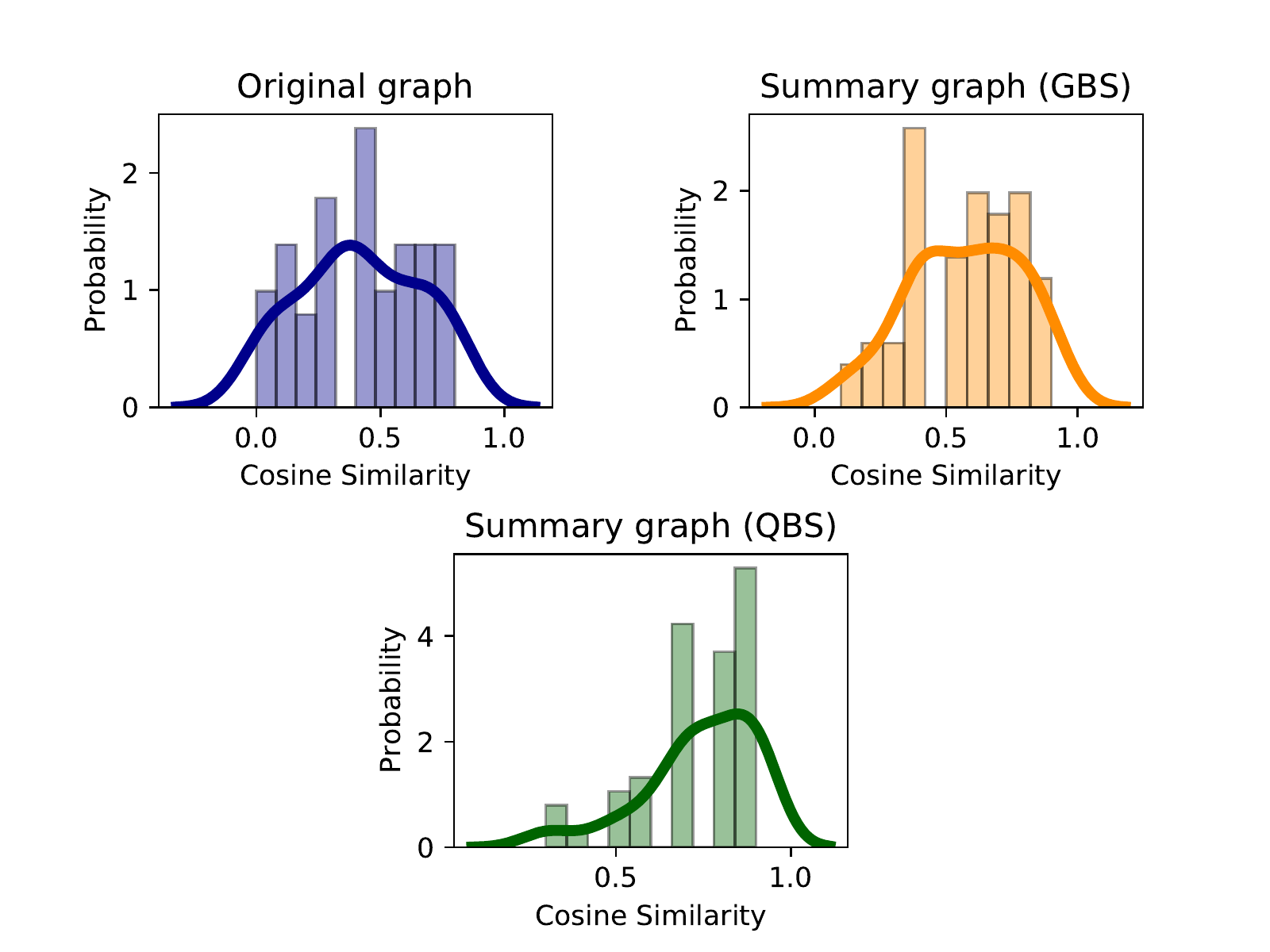}
     \label{fig:distplot2}}
\hspace{0pt}\subfloat[WatDiv.10M]{
     \includegraphics[width=0.5\textwidth]{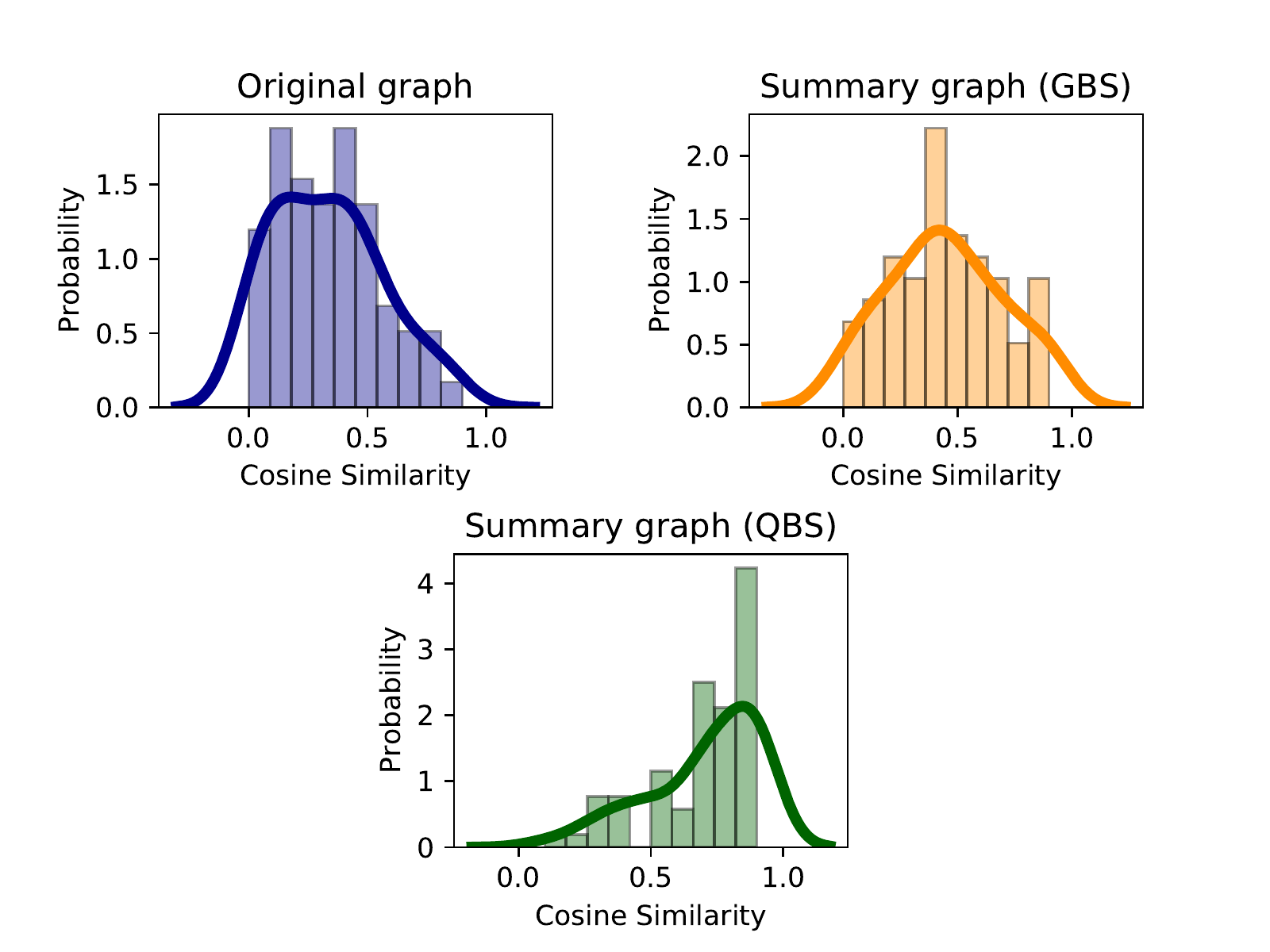}
     \label{fig:distplot3}}
     \vspace{0pt}\subfloat[WatDiv.100M]{
     \includegraphics[width=0.5\textwidth]{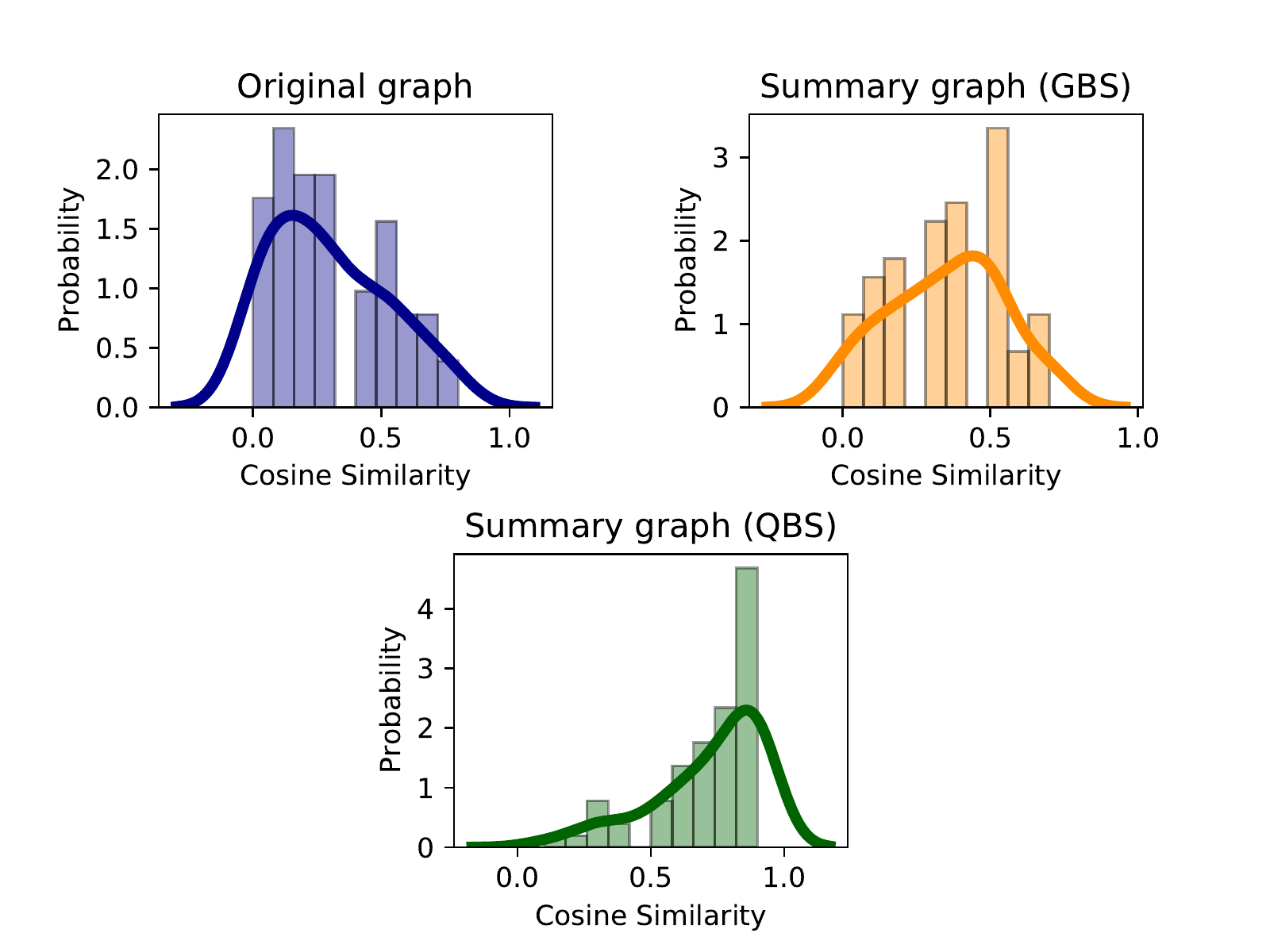}
     \label{fig:distplot4}}
\caption{Compare the distribution of cosine similarity for predicate relatedness in the original and the summarized RDF graph based on GBS and QBS approaches using word embedding model.}
 \label{fig:distribution}
\end{figure*}

\begin{figure*}[t!]
\centering  
\hspace{0pt}\subfloat[DBpedia]{
     \includegraphics[width=0.5\textwidth]{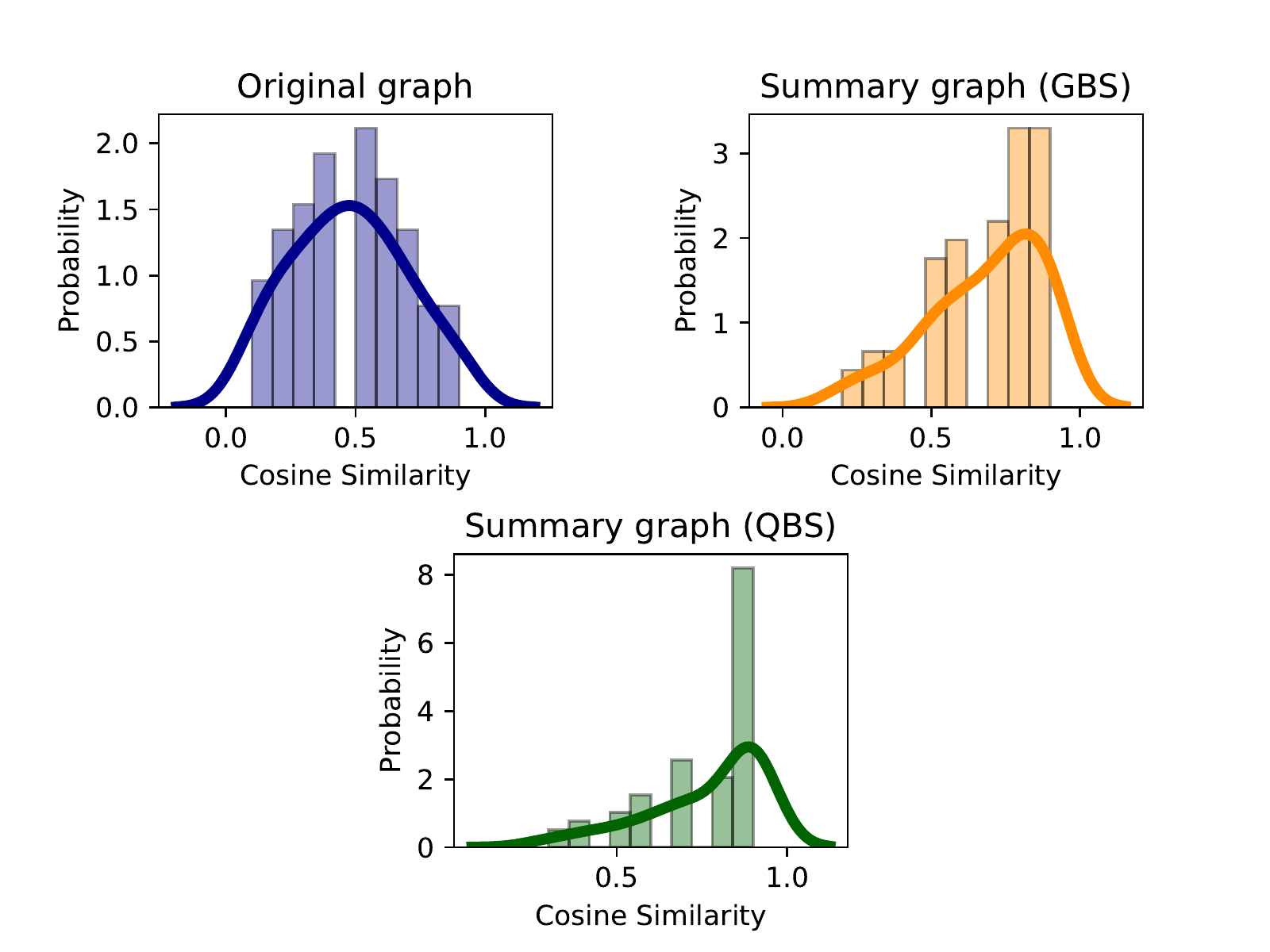}
     \label{fig:distplot1RDF2Vec}}
     \vspace{0pt}\subfloat[ESBM]{
     \includegraphics[width=0.5\textwidth]{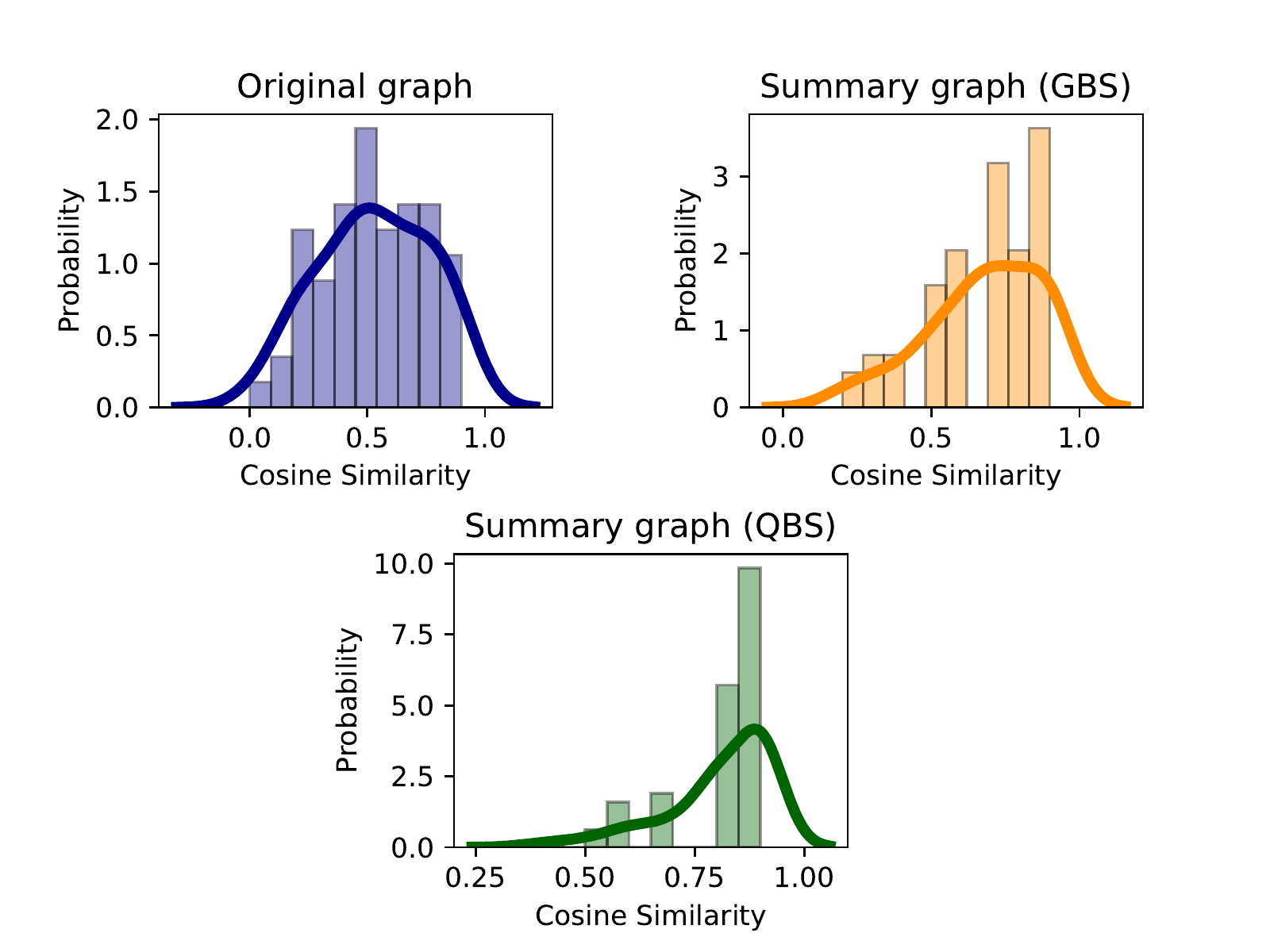}
     \label{fig:distplot2RDF2Vec}}
\hspace{0pt}\subfloat[WatDiv.10M]{
     \includegraphics[width=0.5\textwidth]{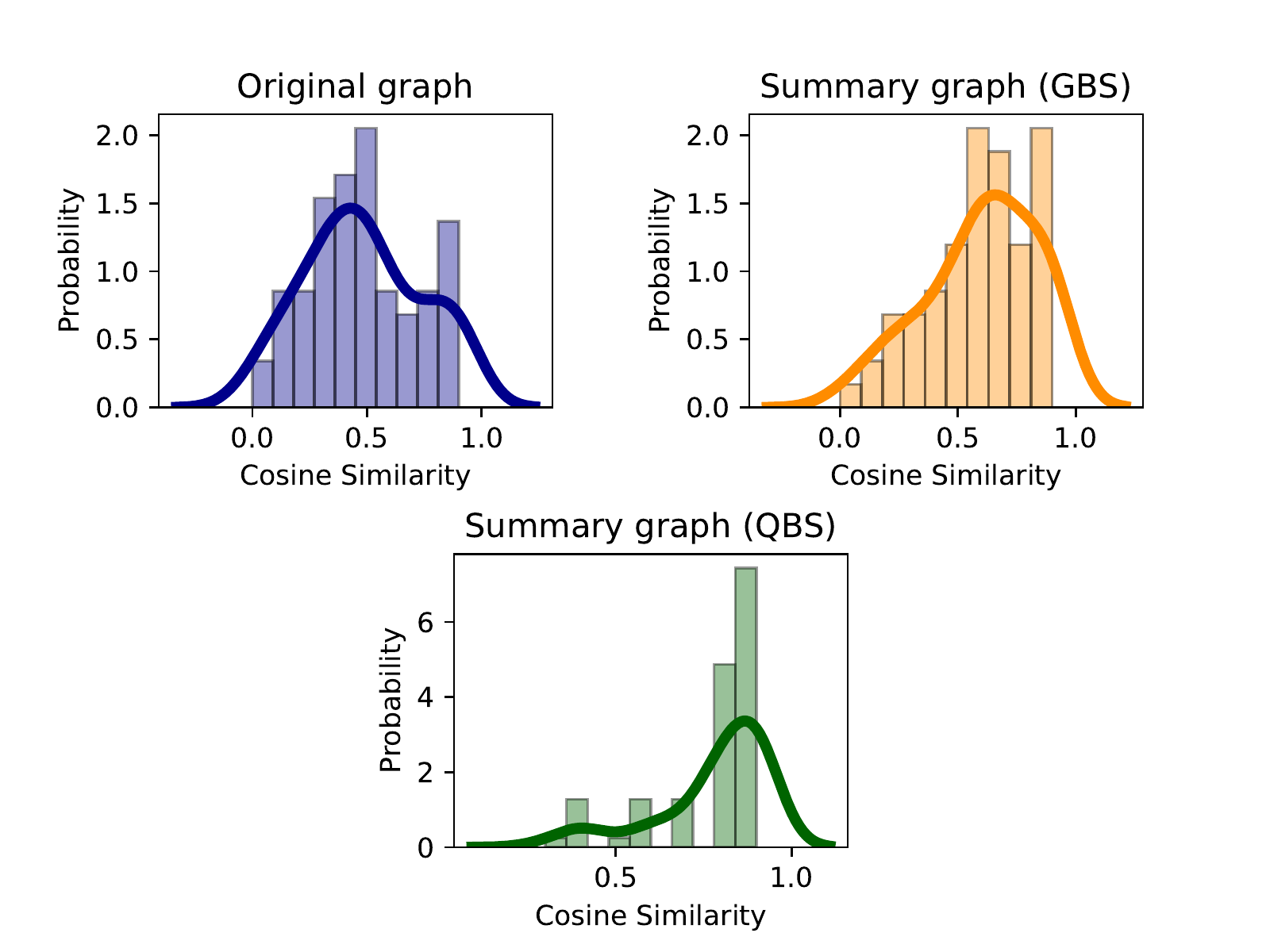}
     \label{fig:distplot3RDF2Vec}}
     \vspace{0pt}\subfloat[WatDiv.100M]{
     \includegraphics[width=0.5\textwidth]{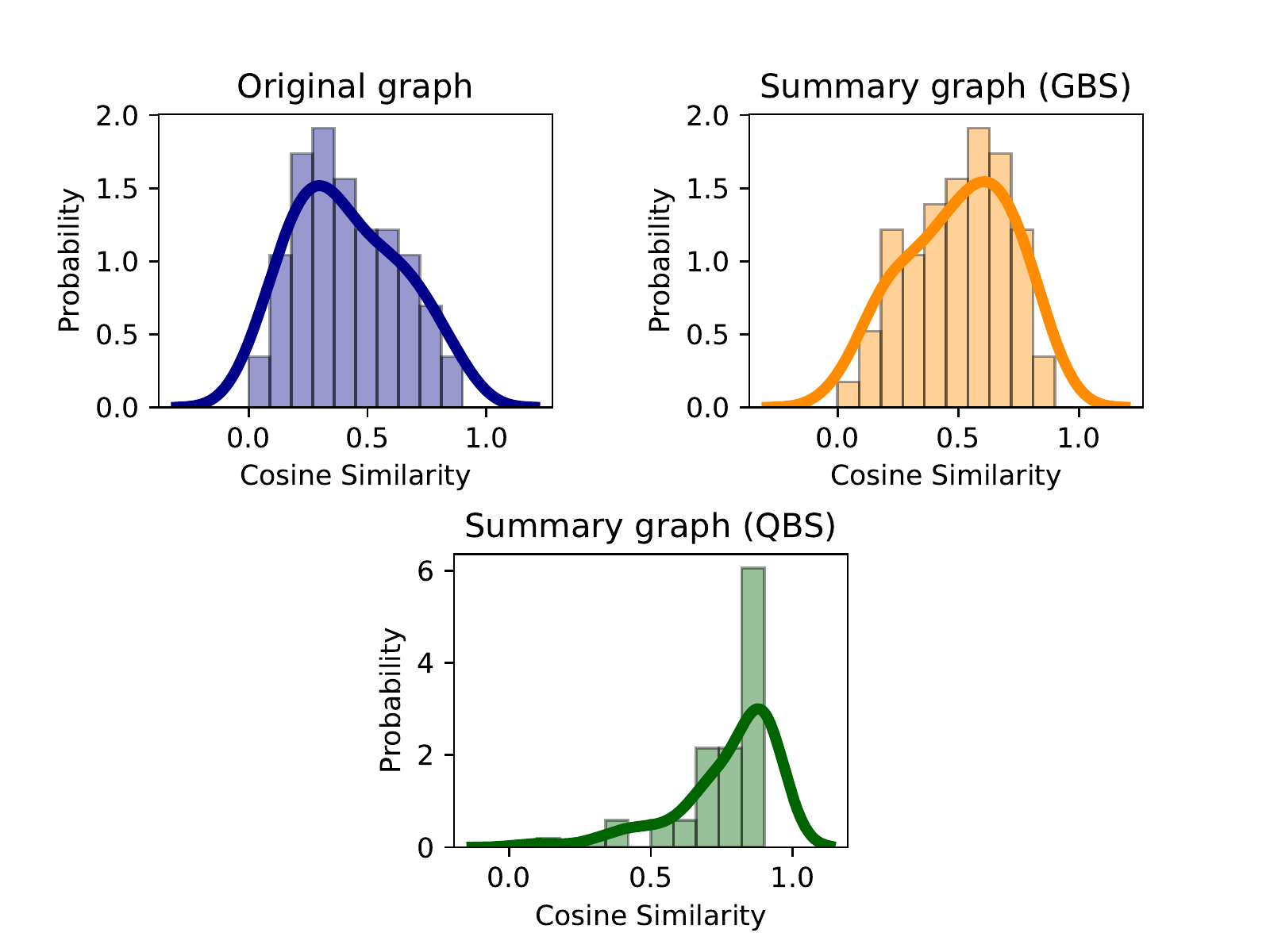}
     \label{fig:distplot4RDF2Vec}}
\caption{Compare the distribution of cosine similarity for predicate relatedness in the original and the summarized RDF graph based on GBS and QBS approaches using graph embedding model.}
 \label{fig:distributionRDF2Vec}
\end{figure*}

\subsection{Impact of predicate relatedness by analyzing similarity measures}

The distribution of similar predicates in each RDF graph using diverse embedding models is different. Referring to the research question \textbf{RQ1}, the predicate relatedness in each dataset has an impact on finding complete answers. For rewriting queries, the embedding techniques such as Word2Vec and RDF2Vec find similar predicates to the given predicate in the query to transform it. If the most similar and semantically relevant predicates are found, the transformation of a query would be more efficient in terms of finding the complete answers. Since finding the most similar predicates can affect verifying the cardinality property; it becomes an important issue on rewriting queries. During experiments, it is discovered that not only different datasets with various sizes have a different distribution of similarity values, but also the different distribution can be seen in diverse embedding models. It means not only the number of triples in RDF datasets can change the result of predicate relatedness, but also the techniques used for embedding have an impact. \autoref{fig:distribution} and \autoref{fig:distributionRDF2Vec} show the distribution of cosine similarity to find the most similar predicates between the original RDF graph and summarized RDF graphs provided by Grouping Based Summarization(GBS) approach and Query Based Summarization (QBS) approach using word embedding model and graph embedding model, respectively.

The results suggest that the way to find similar predicates depends on the size of the RDF graph and the technique for embedding. As seen, the probability of cosine similarity close to 1 in QBS approach is higher than the probability in the original graph and summary graph by GBS approach. The reason is, summary graphs in QBS have much fewer and more similar triples compared with the original ones. In QBS approach, only a part of the RDF graph related to the query has been considered, not the whole graph. Hence, by considering the small part of RDF graph includes triples with the same objects, similar and relevant predicates are close to each other and the distance between them is less. Therefore, by decreasing the number of triples in the dataset and keeping relevant triples to the query, the probability of finding the most similar predicates to a given one in its neighborhood is higher. In GBS approach, the probability of having a cosine similarity close to 0.5 using word embedding and close to 0.7 using graph embedding is higher. The reason is, after summarizing RDF graphs based on an explanation in \autoref{sub:twophase}, many relevant triples are lost. So, the embedding model finds the properties which are not completely relevant and similar. If less relevant similar predicates are found for rewriting the query, then query over the summarized RDF graph will return incomplete answers. Hence, in GBS approach, applying a query over the summarized graph does not return the complete answers.
Moreover, these observations are more obvious in \autoref{fig:distributionRDF2Vec} where RDF2Vec model used as graph embedding model to find similar properties. RDF2Vec employs different walking strategies such as Random Walk, NGram Walk, HALK Walk, Walklet Walk, and Anonymous Walk (\cite{strategy}). In this work, Random Walk is applied to extract the walks over the knowledge graphs. After finding the proper threshold, only predicates with cosine similarity greater than the threshold are considered as synonyms to rewrite the queries. 
The predicate relatedness by analyzing similarity measures in RDF graphs plays an important role in transforming the query to retrieve complete answers. In QBS approach, by finding similar predicates using different embedding models, the transformation of the query has been done efficiently and leads to retrieving complete results.

%Time for generating summary graph only for QBS based on 10 queries (depends on the query is different)
\begin{table}[t!]
\centering
\resizebox{1\textwidth}{!}{
\begin{tabular}{cccccc}
\toprule
% \rowcolor{color2}
\textbf & \scriptsize{\#Triples of} & \multicolumn{2}{c}\scriptsize{GBS Approach} & \multicolumn{2}{c}\scriptsize{QBS Approach} \\
\textbf {Dataset} & \scriptsize{original RDF graph} & \scriptsize{SR} & \scriptsize{ST} & \scriptsize{SR} & \scriptsize{ST} \\
\toprule
\scriptsize{DBpedia} & \scriptsize{2,047} & \scriptsize{32.2} \% & \scriptsize{36.8 s} & \scriptsize{99 \%} & \scriptsize{2.8 s} \\\midrule
\scriptsize{ESBM} & \scriptsize{6,584} &  \scriptsize{33 \%} &  \scriptsize{34.2 s} &  \scriptsize{98.8 \%} &  \scriptsize{2.2 s} \\\midrule
\scriptsize{WatDiv.10M} & \scriptsize{10,916,457} & \scriptsize{41.8 \%} & \scriptsize{100 s} & \scriptsize{96.6 \%} & \scriptsize{14.2 s} \\\midrule
\scriptsize{WatDiv.100M} & \scriptsize{108,997,714} & \scriptsize{57 \%} & \scriptsize{787.6 s} & \scriptsize{97.4 \%} & \scriptsize{53.6 s}  \\
\bottomrule
\end{tabular}
}
\caption{RDF Graphs Summarization Ratio (SR) and Summarization Time (ST)}
\label{table:ky02}    
\end{table}

\begin{table}[t!]
\centering
\begin{tabularx}{\columnwidth}{*{5}{X}}	
\toprule
\multicolumn{1}{l}{}& \multicolumn{3}{c}{\scriptsize{\# of answers}} 
% & \multicolumn{1}{c}{\scriptsize{\# of triples}}
\\
\cline{2-4} 
\rule{0pt}{8pt}
\multirow{2}{*}{$\longrightarrow$} & \scriptsize{\textbf{Sparklify}} & \scriptsize{\textbf{Sparklify+GBS}} & \scriptsize{\textbf{Sparklify+QBS}} 
% & \scriptsize{\textbf{summary graph by QBS }} 
\\

\midrule
\multirow{5}{*}{\rotatebox{90}{\scriptsize{\textbf{DBpedia}}}}
$Q1$ & \scriptsize{$5$} & \scriptsize{$5$} & \scriptsize{$5$} \\
\hspace{0.2cm} $Q2$ & \scriptsize{$5$} & \scriptsize{$5$} & \scriptsize{$5$}\\
\hspace{0.2cm} $Q3$ & \scriptsize{$9$} & \scriptsize{$9$} & \scriptsize{$9$}\\
\hspace{0.2cm} $Q4$ & \scriptsize{$6$} & \scriptsize{$6$} & \scriptsize{$6$} \\
\hspace{0.2cm} $Q5$ & \scriptsize{$8$} & \scriptsize{$8$} & \scriptsize{$8$} \\
\midrule
\multirow{5}{*}{\rotatebox{90}{\scriptsize{\textbf{ESBM}}}}
$Q6$ & \scriptsize{$3$} & \scriptsize{$3$} & \scriptsize{$3$} \\
\hspace{0.2cm} $Q7$ & \scriptsize{$2$} & \scriptsize{$2$} & \scriptsize{$2$}\\
\hspace{0.2cm} $Q8$ & \scriptsize{$5$} & \scriptsize{$5$} & \scriptsize{$5$} \\
\hspace{0.2cm} $Q9$ & \scriptsize{$8$} & \scriptsize{$8$} & \scriptsize{$8$} \\
\hspace{0.2cm} $Q10$ & \scriptsize{$6$} & \scriptsize{$6$} & \scriptsize{$6$} \\
\midrule
\multirow{5}{*}{\rotatebox{90}{\scriptsize{\textbf{WatDiv.10M}}}}
$Q11$ & \scriptsize{$86$} & \scriptsize{$54$} & \scriptsize{$86$}\\
\hspace{0.2cm} $Q12$ & \scriptsize{$56$} & \scriptsize{$30$} & \scriptsize{$56$}\\
\hspace{0.2cm} $Q13$ & \scriptsize{$99$} & \scriptsize{$64$} & \scriptsize{$99$} \\
\hspace{0.2cm} $Q14$ & \scriptsize{$3,834$} & \scriptsize{$2,043$} & \scriptsize{$3,834$} \\
\hspace{0.2cm} $Q15$ & \scriptsize{$59$} & \scriptsize{$26$} & \scriptsize{$59$} \\
\midrule
\multirow{5}{*}{\rotatebox{90}{\scriptsize{\textbf{WatDiv.100M}}}}
$Q16$ & \scriptsize{$59$} & \scriptsize{$31$} & \scriptsize{$59$} \\
\hspace{0.2cm} $Q17$ & \scriptsize{$554$} & \scriptsize{$206$} & \scriptsize{$554$} \\
\hspace{0.2cm} $Q18$ & \scriptsize{$983$} & \scriptsize{$489$} & \scriptsize{$983$}\\
\hspace{0.2cm} $Q19$ & \scriptsize{$38,895$} & \scriptsize{$21,089$} & \scriptsize{$38,895$}\\
\hspace{0.2cm} $Q20$ & \scriptsize{$146$} & \scriptsize{$85$} & \scriptsize{$146$} \\
\bottomrule
\end{tabularx}
{\caption{Compare the number of answers by querying multi triple patterns query over the original graph and transformed query as simple query over the summarized graph provided by Grouping Based Summarization (GBS) approach and Query Based Summarization (QBS) approach using the query engine Sparklify.}\label{table:ky04}}
\end{table}

\begin{table}[t!]
\centering
\begin{tabularx}{\columnwidth}{*{7}{X}}
\toprule
\multicolumn{1}{l}{}& \multicolumn{5}{c}{\scriptsize{Execution Time (seconds)}} 
% & \multicolumn{1}{c}{\scriptsize{\# of triples}}
\\
\cline{2-6} 
\rule{0pt}{8pt}
\multirow{2}{*}{$\longrightarrow$} & \tiny{\textbf{S(Total)}} & \tiny{\textbf{S+GBS(SUM)}} & 
\tiny{\textbf{S+GBS(QA)}} & 
\tiny{\textbf{S+QBS(SUM)}} & \tiny{\textbf{S+QBS(QA)}} 
% & \scriptsize{\textbf{summary graph by QBS }} 
\\
\midrule
\multirow{5}{*}{\rotatebox{90}{\scriptsize{\textbf{DBpedia}}}}
$Q1/Q'1$ & \scriptsize{$19$} & \scriptsize{$37$} & \scriptsize{$6$} & \scriptsize{$2$} & \scriptsize{$3$} \\
\hspace{0.2cm} $Q2/Q'2$ & \scriptsize{$18$} & \scriptsize{$39$} & \scriptsize{$6$} & \scriptsize{$3$} & \scriptsize{$5$}\\
\hspace{0.2cm} $Q3/Q'3$ & \scriptsize{$29$} & \scriptsize{$45$} & \scriptsize{$9$} & \scriptsize{$3$} & \scriptsize{$6$}\\
\hspace{0.2cm} $Q4/Q'4$ & \scriptsize{$18$} & \scriptsize{$30$} & \scriptsize{$5$} & \scriptsize{$3$} & \scriptsize{$5$}\\
\hspace{0.2cm} $Q5/Q'5$ & \scriptsize{$17$} & \scriptsize{$33$} & \scriptsize{$6$} & \scriptsize{$3$} & \scriptsize{$4$}\\
\midrule
\multirow{5}{*}{\rotatebox{90}{\scriptsize{\textbf{ESBM}}}}
$Q6/Q'6$ & \scriptsize{$32$} & \scriptsize{$45$} & \scriptsize{$9$} & \scriptsize{$3$} & \scriptsize{$5$}\\
\hspace{0.2cm} $Q7/Q'7$ & \scriptsize{$19$} & \scriptsize{$36$} & \scriptsize{$7$} & \scriptsize{$2$} & \scriptsize{$4$}\\
\hspace{0.2cm} $Q8/Q'8$ & \scriptsize{$20$} & \scriptsize{$31$} & \scriptsize{$5$} & \scriptsize{$2$} & \scriptsize{$6$} \\
\hspace{0.2cm} $Q9/Q'9$ & \scriptsize{$20$} & \scriptsize{$30$} & \scriptsize{$6$} & \scriptsize{$2$} & \scriptsize{$5$}\\
\hspace{0.2cm} $Q10/Q'10$ & \scriptsize{$19$} & \scriptsize{$29$} & \scriptsize{$5$} & \scriptsize{$2$} & \scriptsize{$4$}\\
\midrule
\multirow{5}{*}{\rotatebox{90}{\scriptsize{\textbf{WatDiv.10M}}}}
$Q11/Q'11$ & \scriptsize{$85$} & \scriptsize{$102$} & \scriptsize{$56$} & \scriptsize{$14$} & \scriptsize{$32$}\\
\hspace{0.2cm} $Q12/Q'12$ & \scriptsize{$79$} & \scriptsize{$92$} & \scriptsize{$47$} & \scriptsize{$11$} & \scriptsize{$28$}\\
\hspace{0.2cm} $Q13/Q'13$ & \scriptsize{$80$} & \scriptsize{$89$} & \scriptsize{$78$} & \scriptsize{$15$} & \scriptsize{$26$}\\
\hspace{0.2cm} $Q14/Q'14$ & \scriptsize{$82$} & \scriptsize{$123$} & \scriptsize{$158$} & \scriptsize{$13$} & \scriptsize{$24$}\\
\hspace{0.2cm} $Q15/Q'15$ & \scriptsize{$98$} & \scriptsize{$94$} & \scriptsize{$38$} & \scriptsize{$18$} & \scriptsize{$28$}\\
\midrule
\multirow{5}{*}{\rotatebox{90}{\scriptsize{\textbf{WatDiv.100M}}}}
$Q16/Q'16$ & \scriptsize{$321$} & \scriptsize{$802$} & \scriptsize{$274$} & \scriptsize{$88$} & \scriptsize{$109$}\\
\hspace{0.2cm} $Q17/Q'17$ & \scriptsize{$147$} & \scriptsize{$541$} & \scriptsize{$349$} & \scriptsize{$42$} & \scriptsize{$56$}\\
\hspace{0.2cm} $Q18/Q'18$ & \scriptsize{$192$} & \scriptsize{$627$} & \scriptsize{$482$} & \scriptsize{$41$} & \scriptsize{$86$}\\
\hspace{0.2cm} $Q19/Q'19$ & \scriptsize{$401$} & \scriptsize{$1,190$} & \scriptsize{$850$} & \scriptsize{$63$} & \scriptsize{$168$}\\
\hspace{0.2cm} $Q20/Q'20$ & \scriptsize{$182$} & \scriptsize{$778$} & \scriptsize{$321$} & \scriptsize{$34$} & \scriptsize{$105$}\\
\bottomrule
\end{tabularx}
{\caption{Compare the execution time in the original and the summarized RDF graph provided by Grouping Based Summarization (GBS) and Query Based Summarization (QBS) approaches. Q is the query over the original RDF graph and Q’ is the query over the summarized RDF graph. S represents as baseline Sparklify executed over the original RDF graph. }\label{table:ky05}}
\end{table}

\subsection{Effectiveness of proposed summarized graph}
To evaluate the effectiveness of proposed approaches and answer research questions \textbf{RQ2} and \textbf{RQ3}, the results of experiments are reported based on reducing the size of the RDF graph and returning the complete answers. As seen in Table~\ref{table:ky02}, the Summarization Ratio (SR) for QBS approach is higher than GBS approach. It means the number of triples in the summarized graph provided by QBS approach is less than the number of triples in the summarized graph provided by GBS approach. In QBS approach by considering a query of the user and the part of the original RDF graph which is related to this query, the Summarization Ratio (SR) has significantly increased compared with GBS approach. The compactness property has been verified in both approaches, but it has a better result in QBS approach. For example, the QBS approach makes the dump of DBpedia dataset \textit{99\%} smaller than the original dataset, while by GBS approach it gets only \textit{32.2\%} smaller than the original one. Therefore, the compactness based on QBS approach is more than GBS approach. Also, Table~\ref{table:ky02} provides the Summarization Time (ST) which is the time required to generate the summary graph in both approaches based on the given datasets. Here, it should be mentioned that summarizing the RDF graph in QBS approach is not only related to the size of dataset but also is based on the query of the user. Therefore, the size and time provided here for generating the summarization graph is the average of the size and time among different queries. Moreover, Table~\ref{table:ky04} shows the result of applying the SPARQL queries with multi triple patterns over the original RDF graph and compares them with the result of applying the transformation of these queries over summarized graphs provided by both approaches. Based on the results provided here the cardinality in GBS approach has been verified only for small datasets and not for medium and large datasets, while based on Theorem~\ref{theo2} in QBS approach all information related to the query will be found in the summarized graph. Therefore, in QBS approach, the cardinality for all size of datasets has been verified. 
In order to return the answers for SPARQL queries, Sparklify, the scalable component which is the default query engine in the SANSA Stack, has been used during all the experiments. First, the query is applied over the original RDF graph. Later, the transformation of this query which is a simple query with fewer triple patterns is applied to the summary graph. As seen in Table~\ref{table:ky04}, the number of answers retrieved in GBS is less than the number of answers from the original RDF graph which means during summarizing some information is lost while in QBS the number of answers retrieved from the summarized graph is equal to the number of answers from the original RDF graph. Thus, in QBS, the cardinality property has been verified.

\begin{figure*}[t!]
\centering  
\hspace{0pt}\subfloat[DBpedia]{
     \includegraphics[width=0.5\textwidth]{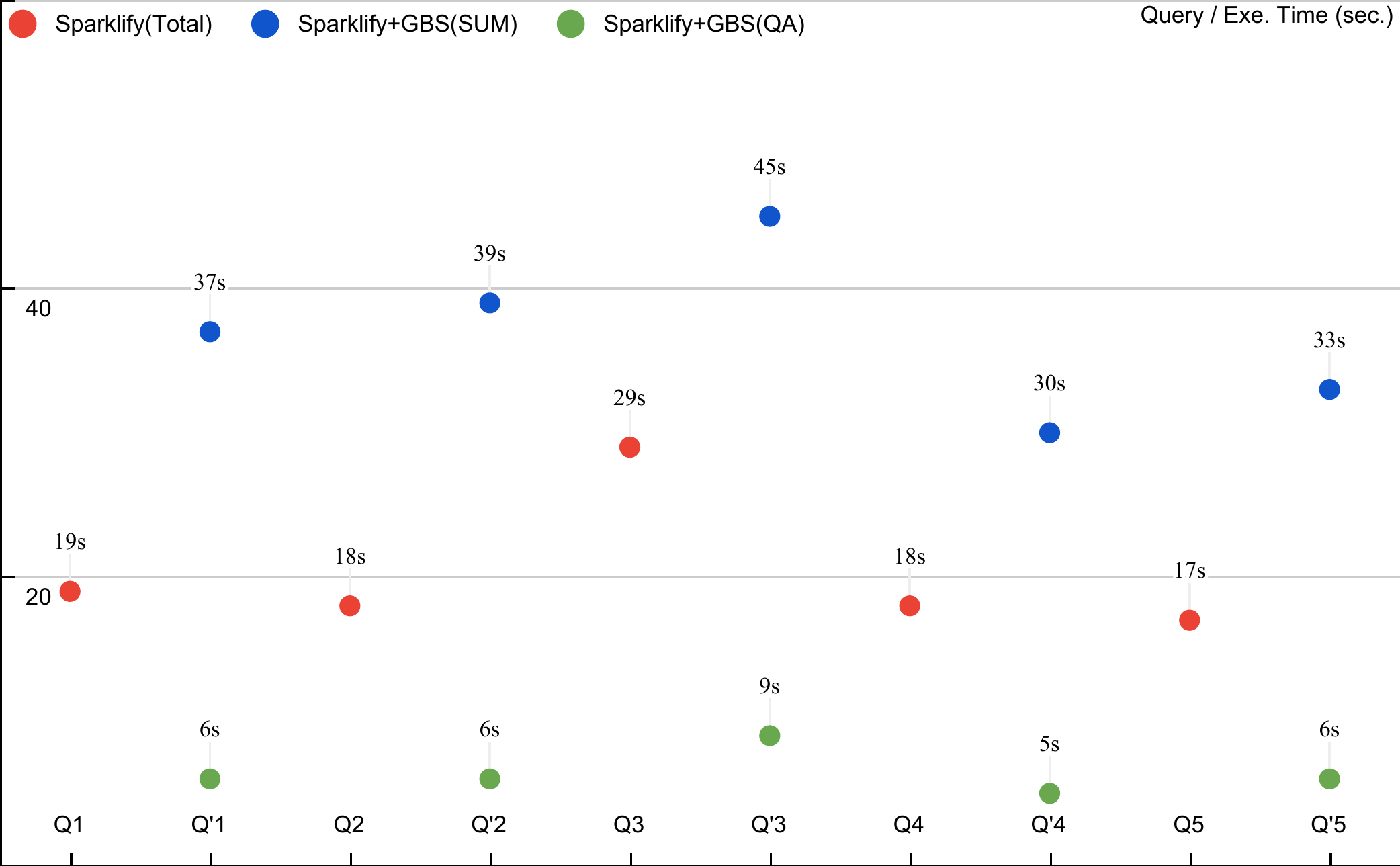}
     \label{fig:timeDBpedia}}
     \vspace{0pt}\subfloat[ESBM]{
     \includegraphics[width=0.5\textwidth]{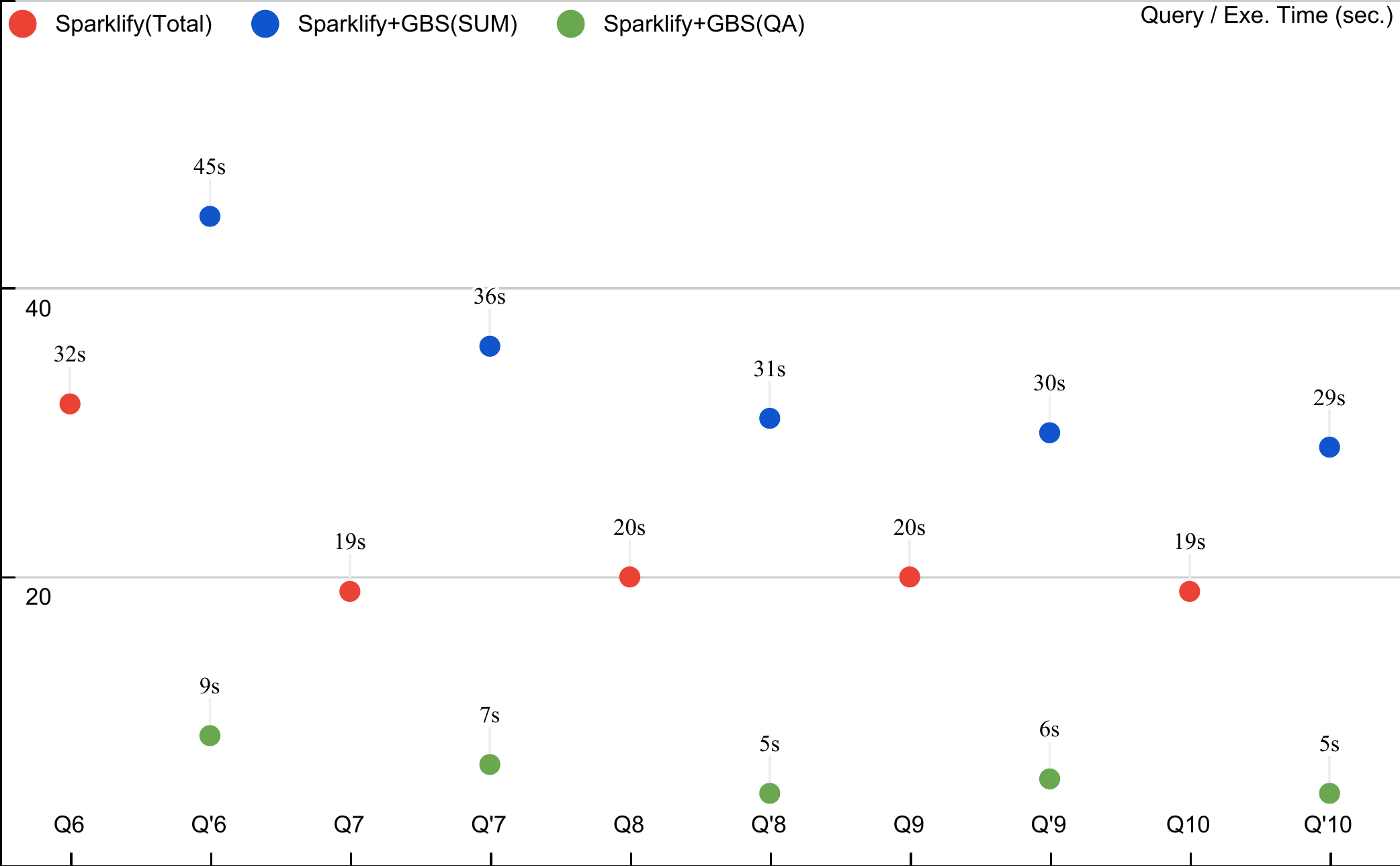}
     \label{fig:timeESBM}}
\hspace{0pt}\subfloat[WatDiv.10M]{
     \includegraphics[width=0.5\textwidth]{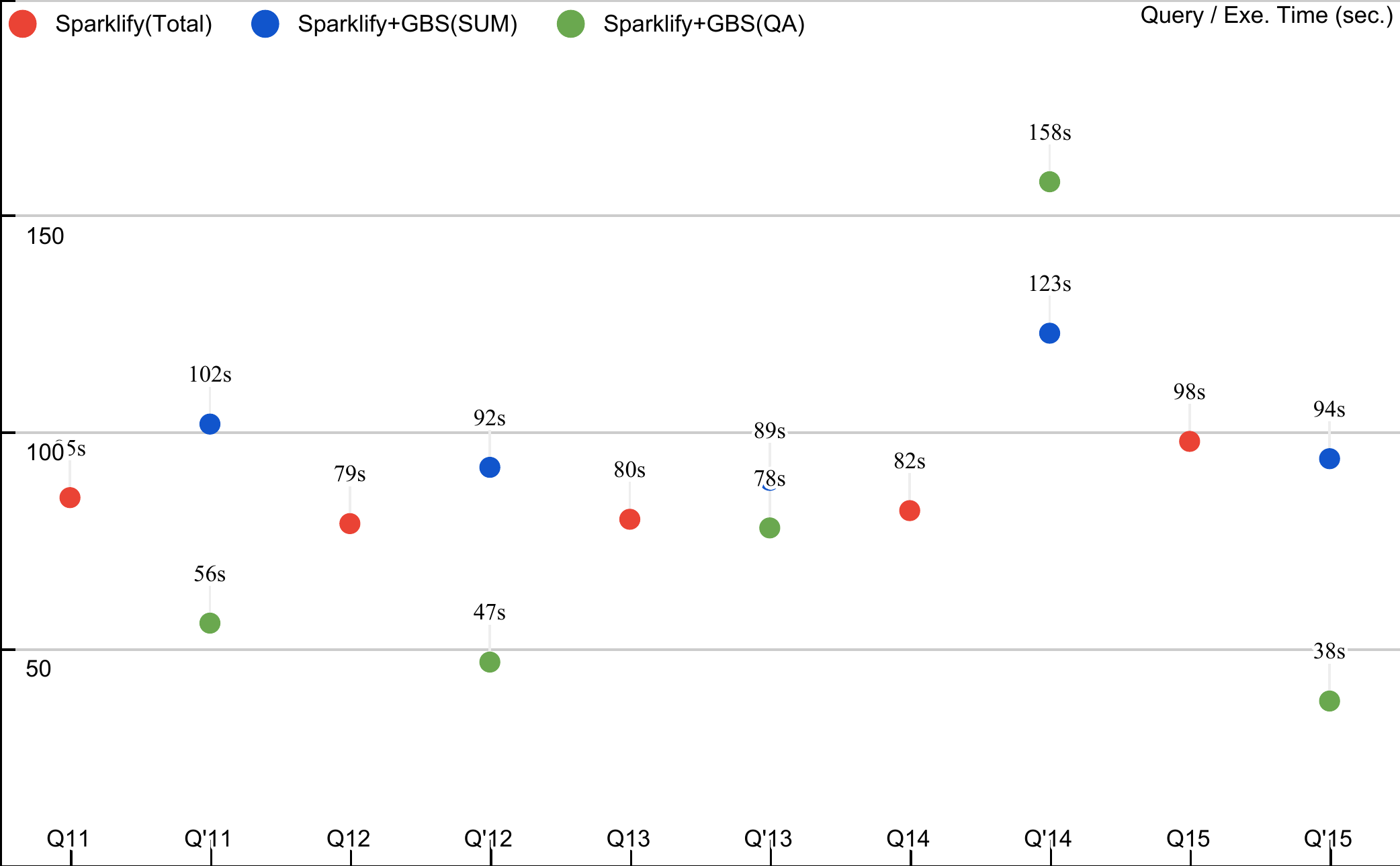}
     \label{fig:timeWat10}}
     \vspace{0pt}\subfloat[WatDiv.100M]{
     \includegraphics[width=0.5\textwidth]{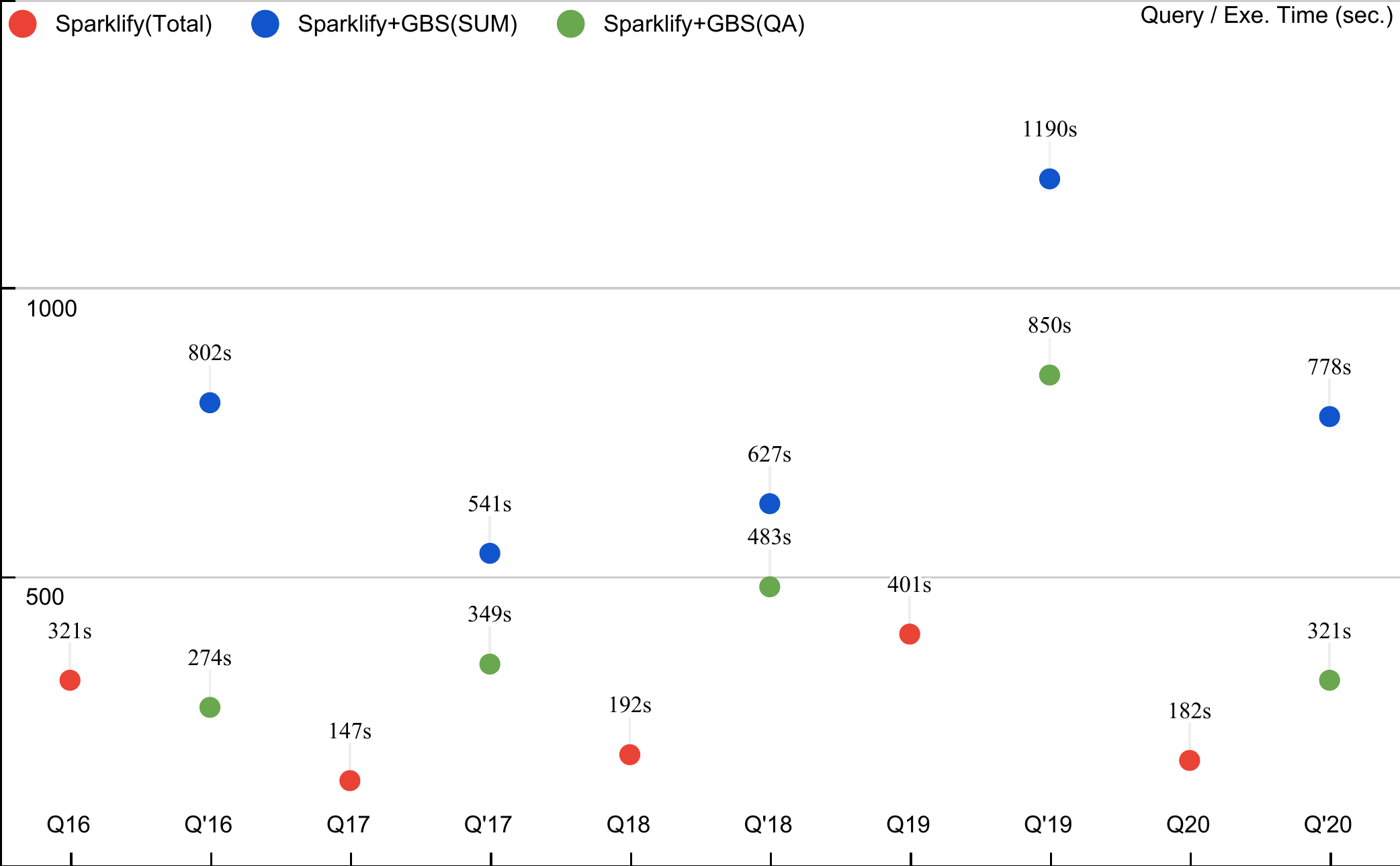}
     \label{fig:timeWat100}}
\caption{Comparing execution time in original RDF graph and summary RDF graph provided by Grouping Based Summary Graph (GBS) approach over four different datasets with different sizes. Q and Q' are executed over the original and summarized RDF graphs, respectively. Red dots show the execution time (in seconds) to answer a query over the original RDF graph. Blue and green dots show the time for summarizing RDF graph (SUM) and the time for query answering (QA) over summary graph, respectively. As seen, by increasing the size of datasets, execution time in the summarized graph is much higher than the original graph. }
 \label{fig:timerunning}
\end{figure*}

\begin{figure*}[t!]
\centering  
\hspace{0pt}\subfloat[DBpedia]{
     \includegraphics[width=0.5\textwidth]{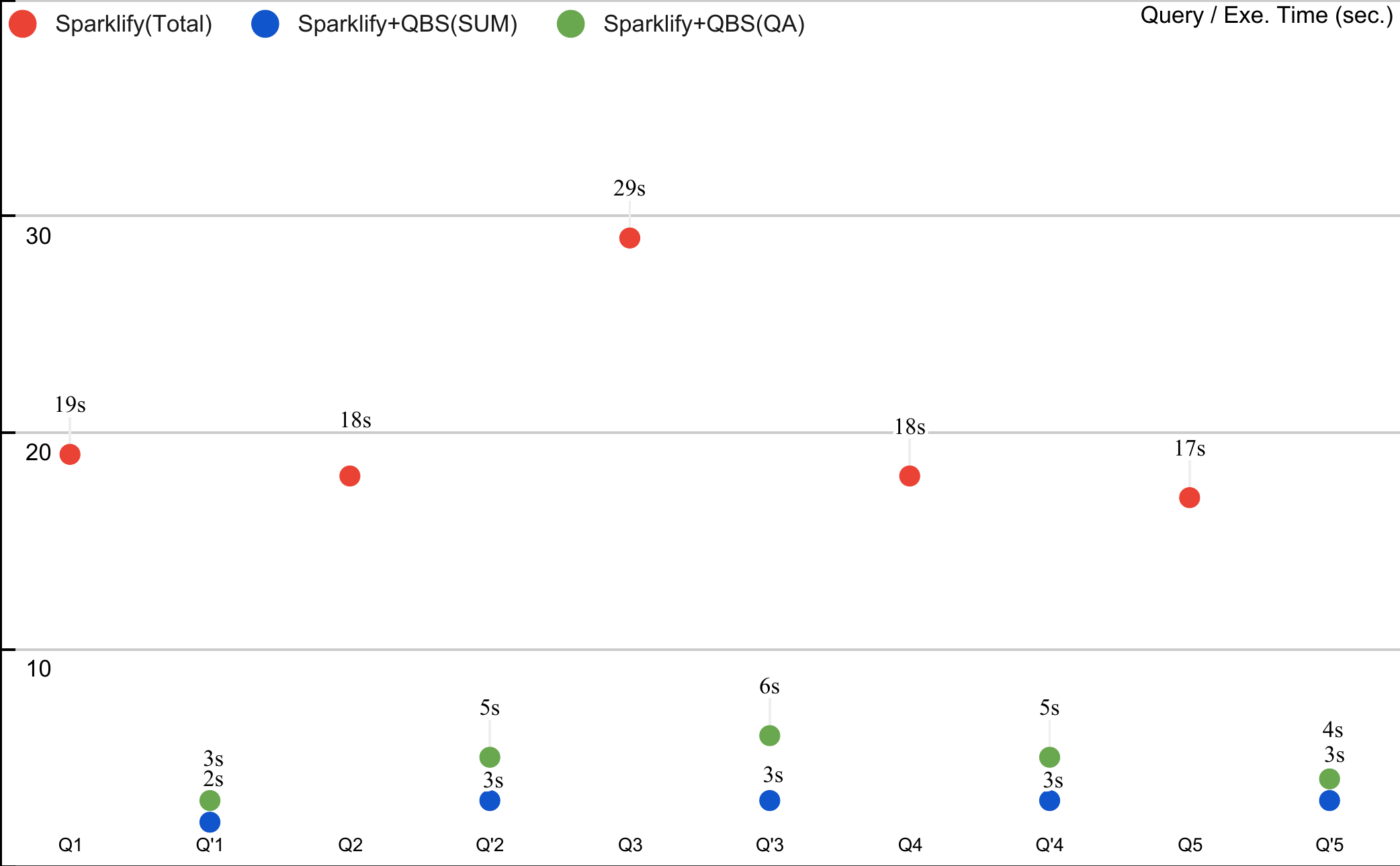}
     \label{fig:time2DBpedia}}
     \vspace{0pt}\subfloat[ESBM]{
     \includegraphics[width=0.5\textwidth]{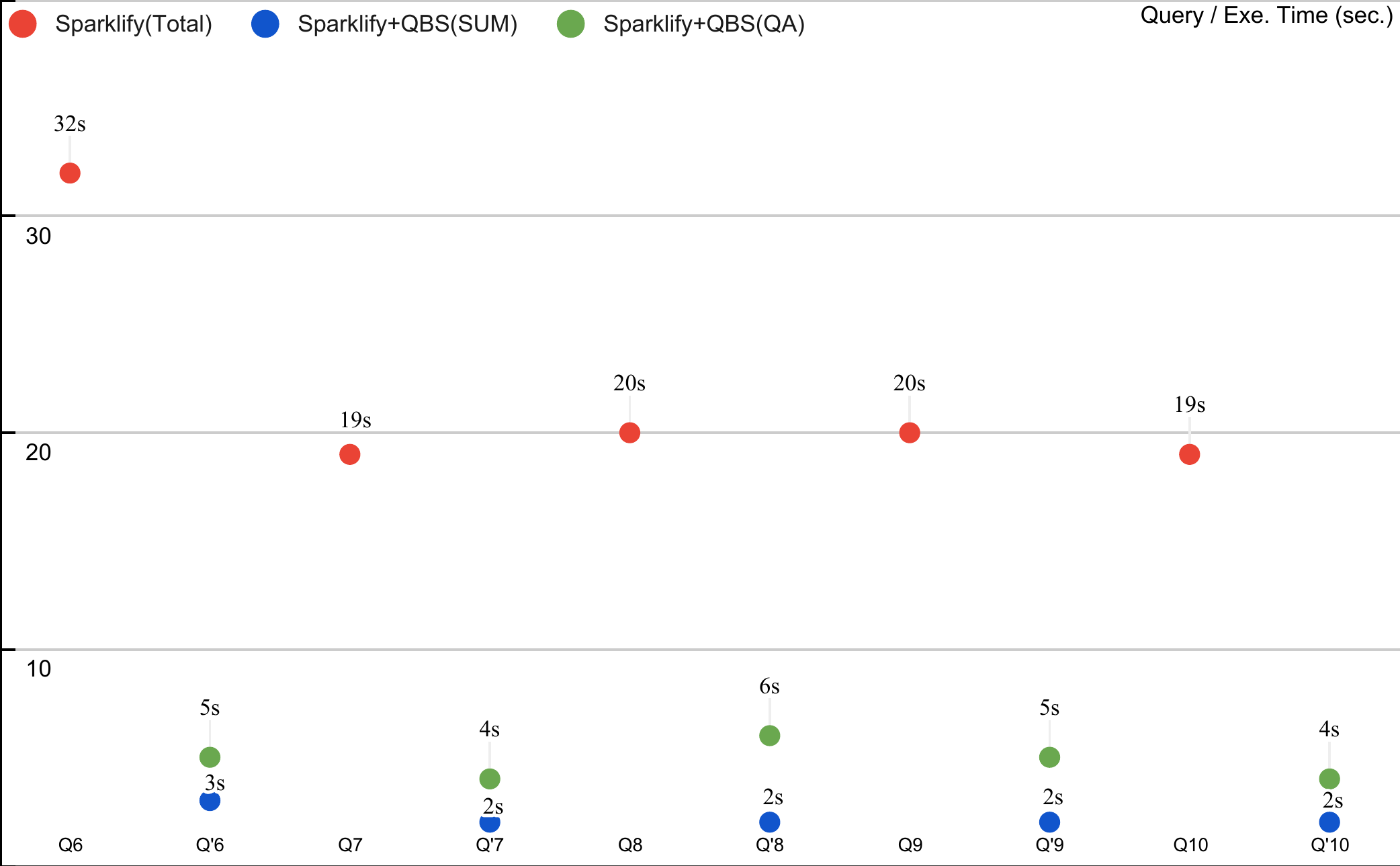}
     \label{fig:time2ESBM}}
\hspace{0pt}\subfloat[WatDiv10]{
     \includegraphics[width=0.5\textwidth]{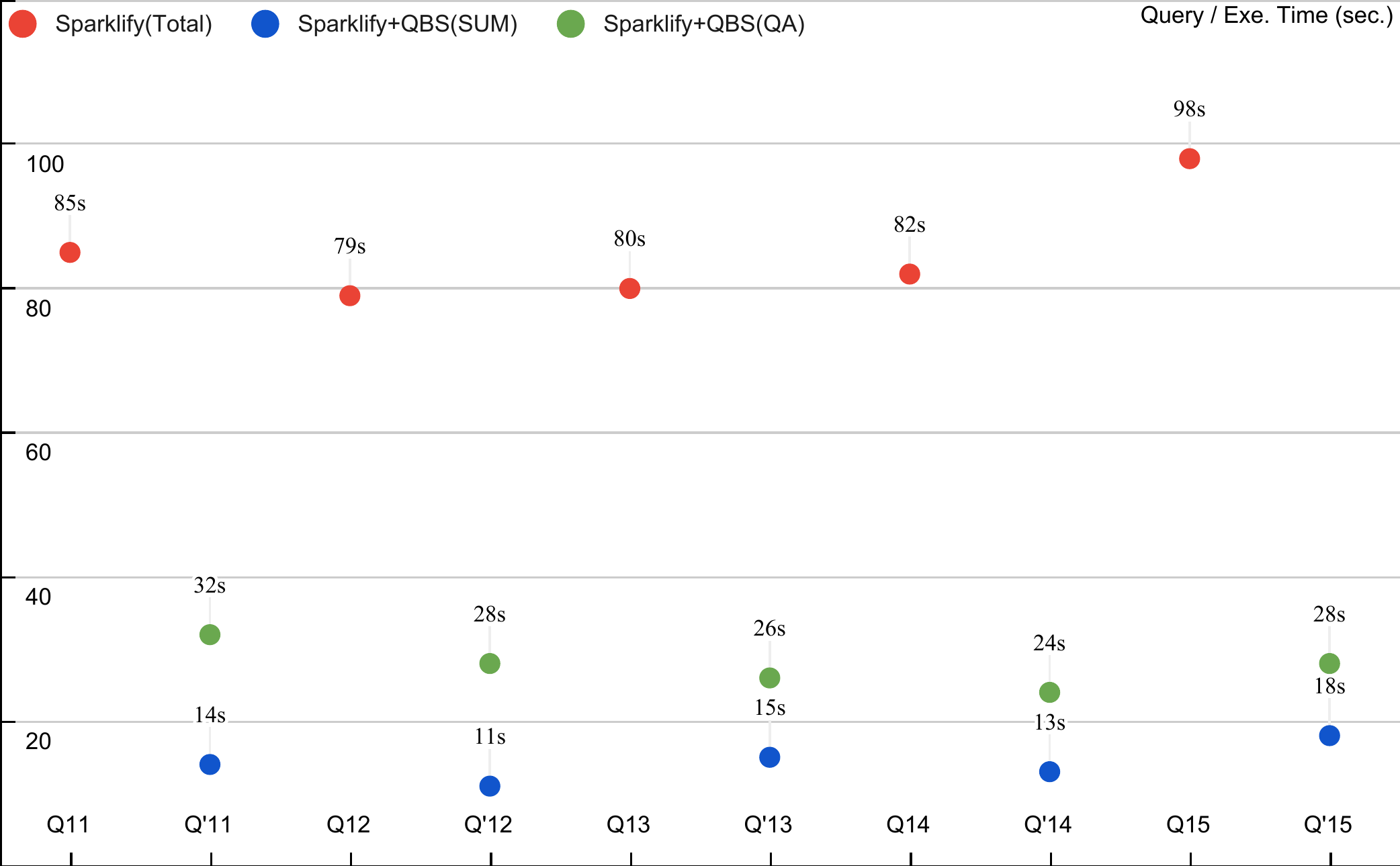}
     \label{fig:time2Wat10}}
     \vspace{0pt}\subfloat[WatDiv100]{
     \includegraphics[width=0.5\textwidth]{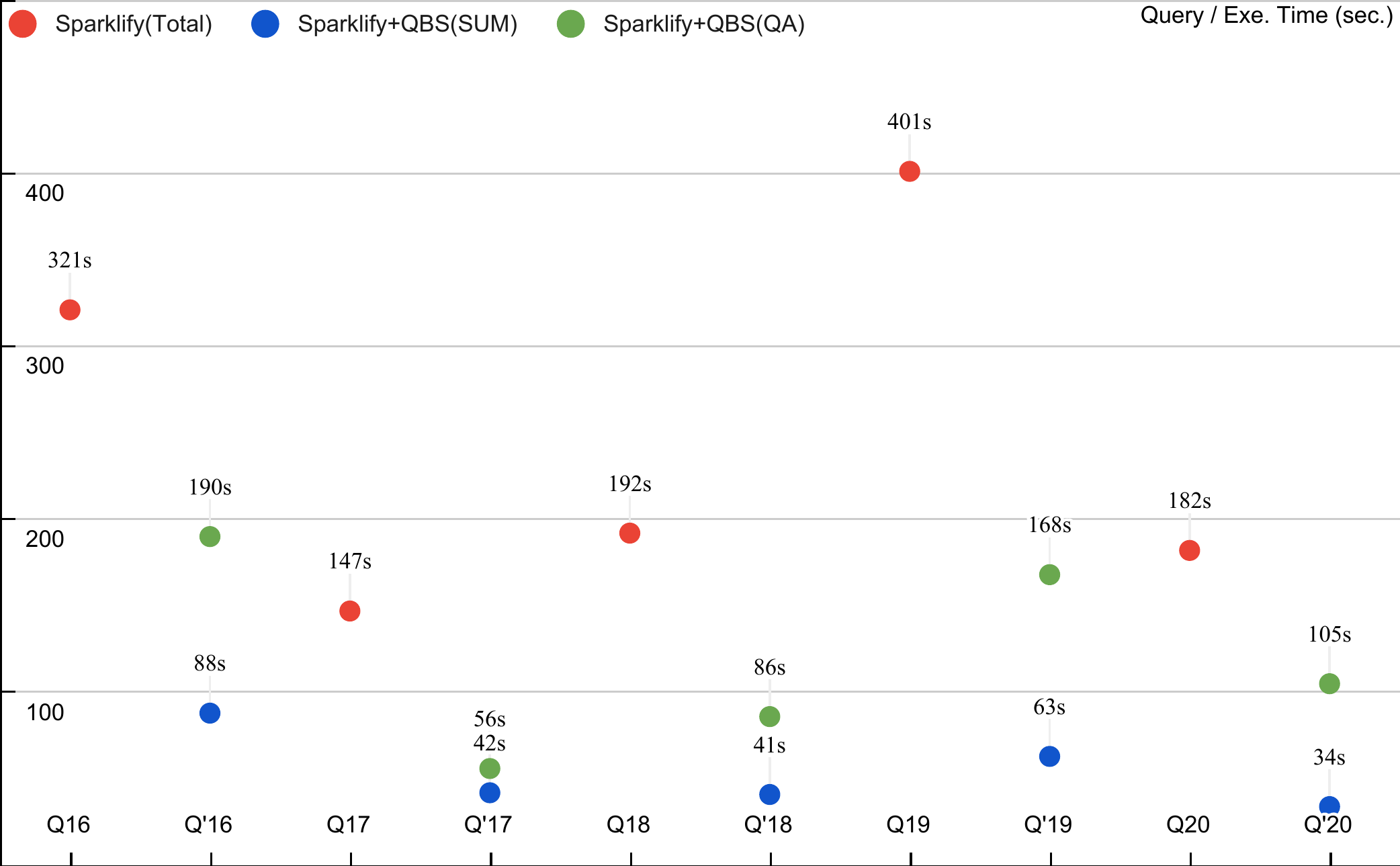}
     \label{fig:time2Wat100}}
\caption{Execution time in original and summary RDF graphs using Query Based Summary Graph (QBS) on four datasets. Q and Q' are queries over original and summarized RDF graphs, respectively. As seen, by increasing the size of datasets, execution time (in secs.) in the summarized graph is much less than in the original RDF graph. Thus, the execution time decreases when the dataset size on QBS increases.}
 \label{fig:timerunning2}
\end{figure*}

\subsection{Efficiency of transforming queries}
To answer the research question \textbf{RQ4}, we evaluate the execution time of query processing in the original RDF graph and the proposed summarized graphs provided by both approaches, GBS and QBS. The query processing has been done by the query engine, Sparklify. We report on the average query processing time after three times running the algorithms.

Table~\ref{table:ky05} shows the comparison of executing time over baseline and combing it with  proposed approaches. Also, to verify that the query engine Sparklify combined with QBS approach to query over the summarized RDF graph achieve a greater reduction in execution time than using the Sparklify over the original RDF graph, a Wilcoxon signed rank test is run with the result of p-value $< 6.103e-05$. As a result, the observed outcomes indicate a reduction in execution time of query processing over summarized RDF graphs generated by QBS. The execution time of Sparklify against the original RDF graph is the baseline. Moreover, both query processing time distributions differ significantly.

\autoref{fig:timerunning} illustrates the execution time of $20$ SPARQL queries over the original RDF graph compared with the summarized graph generated by Grouping Based Summary Graph (GBS) approach. The execution time is the summation of time for summarizing the RDF graph (SUM) and time for query answering (QA). As seen, execution time in summarized graph provided by GBS specially in larger datasets (Figure \autoref{fig:timeWat10} and Figure \autoref{fig:timeWat100}) is much higher than the execution time in original RDF graph. \autoref{fig:timerunning2} shows the same experiments but over summarized RDF graph generated based on Query Based Summary Graph (QBS) approach in four different size of datasets. As seen in \autoref{fig:timerunning2}, the execution time in summarized graph provided by QBS specially in larger datasets (Figure \autoref{fig:time2Wat10} and Figure \autoref{fig:time2Wat100}) is much less than the execution time in original RDF graph. The approach of QBS speeds up the execution time by up to $80 \%$ by summarizing the RDF graph. Therefore, the execution time of query processing in the summarized graph of QBS approach is much less than in the summarized graph of GBS approach. Also, by comparing the summarizing time (SUM) and querying answering (QA) in both approaches it is observed that the time for summarizing in QBS approach is less than the time for querying, while in GBS approach the summarizing time (SUM) is higher than query answering (QA) time. In general, all the observations show the optimization of the grouped based approach, GBS, to the query based approach, QBS, in terms of compactness, completeness, and execution time.

\section{Discussion}
\label{sec:discussion}
The presented experimental results confirm that the proposed RDF graph summarization technique is able to reduce the size of RDF graph by preserving necessary information. Moreover, the significant decrease in execution time is observed during query processing. 

\subsection{Contribution to literature}
The results of our summarization approaches are compared with the SPARQL query engine, Sparklify, as a baseline over the original RDF graph. The number of retrieved answers and the execution time over the summarized graph compared with the original RDF graph are shown in Table~\ref{table:ky04} and Table~\ref{table:ky05}, respectively. The results describe that the query processing over proposed summarized RDF graph is superior to querying over original RDF graph. Table~\ref{table:ky} provides an overview of existing summarization methods mentioned in \autoref{sec:related}, and compared with our approach. As seen, summarization methods in a large variety of concepts are different. However, QBS is able to not only reduce the size of the graph, but also speed up execution time. Thus, we define data management methods that can be used to enrich the portfolio of frameworks for managing and querying larger RDF graphs. Given the rapid increase of large RDF datasets, these methods will play a relevant role in scalability, providing thus the basis for the development of real-world applications.

\subsection{Practical implication}
The proposed approach is aimed to ensure the compactness, completeness, and improve execution time. It focuses on reducing the size of RDF graph by providing the lossless and low-cost query processing.
Although our approach performed better than the baseline, there are still some issues to discuss. For example, the assumption in our work is the original knowledge graphs consist of synonyms properties. In case the knowledge graph has few or no synonyms for properties, the summarization cannot be done properly. In the naive summarizing approach (GBS), the size of the RDF graph has vital impact on rate of summarization and time of execution. On the other hand, in the optimized QBS approach, the summarization ratio and execution time has direct relationship to the complexity of queries. Transformation of these complex queries to the simple ones based on similarity measures needs to be done efficiently. Defining the proper candidate sets helps queries be transformed in a way to retrieve the complete answers in less execution time. The techniques and methods to select and prune candidate sets still is an issue to discuss. However, our optimized summarization approach can be applied to any RDF knowledge graphs with synonymous properties. Each embedding technique has its own advantages and disadvantages on a variety of knowledge graphs. Except embedding techniques used in this paper, other techniques can be studied and compared. Moreover, other state-of-the-art summarization techniques with more evaluation criteria such as recall and precision to assess the accuracy of our approach can be added to the empirical evaluation. These limitations of our work needs to be addressed in future work.

\begin{table}[t!]
\centering
\resizebox{1\textwidth}{!}{
\begin{tabular}{cccccc}
\toprule
% \rowcolor{color2} 
\textbf{Research} & \textbf{Summary Type} & \textbf{Input} & \textbf{Technique} & \textbf{Output Graph} & \textbf{Purpose} \\
\toprule
ASSG by \cite{ASSG} & Structural RDF & Instance & Compression & Compressed
Graph & Query Answering\\\midrule
\cite{10.1145/2630602.2630610} & Structural RDF & Instance & Partitioning & RDF Graph &  Query Answering\\\midrule
\cite{Sydow} & Structural RDF & Instance & Selecting Sub-Graph & RDF Graph & Visualization \\\midrule
\cite{Zneika} & Pattern Mining & Instance & Approximate Graph Patterns & RDF Graph & Query Answering \\\midrule
\cite{Karim2020CompactingFS} & Pattern Mining & RDF Graph & Factorization & Factorized RDF Graph & Query Processing \\\midrule
CoSum by \cite{Ghasemi} & Statistical RDF & k-type & Grouping & Super & Entity Resolution\\\midrule
SPARQL Similarity Search & Hybrid RDF & Instance & Structural and & Multi-layer & Query Optimization \\
by \cite{Zheng:2016:SSS:2983200.2983201} & & and  Schema & Pattern Mining & Graph &  \\\midrule
GBS & Hybrid RDF & RDF Graph & Group based & Multi & Query Processing \\
Proposed Naive Approach & & & Summarization & Sub-Graphs \\\midrule
QBS & Hybrid RDF & RDF Graph & Query based & Selective  & Query Processing \\
Proposed Smart Approach & & & Summarization & RDF Triples\\
\bottomrule
\end{tabular}
}
\caption{Comparison of some graph summarization techniques.}
\label{table:ky}  
\end{table}

\section{Conclusions and future work}
\label{sec:conclusion}
In this paper, we tackled the challenge of RDF graph summarization to optimize query processing. The proposed techniques contribute to the portfolio of tools to efficiently manage knowledge graphs, which is of significant relevance given the role of knowledge graphs in knowledge representation.  
We presented the Grouping Based Summarization (GBS) and the Query Based Summarization (QBS) approaches. QBS optimizes GBS and ensures query completeness.
Technically, we implemented our solutions on top of the state-of-the-art SANSA Stack, allowing our methods to run on large-scale RDF graphs using Apache Spark as a process engine.
From the evaluation results, QBS  provides a good Summarizing Ratio (SR) from \textit{96\%} to \textit{99\%} in terms of the number of triples that are needed for evaluation.
From the query time point of view, there is a notable reduction in query execution, up to \textit{80\%} in the proposed Query Based Summarization (QBS) approach compared with the original RDF graph queried by Sparklify.
This less complexity in QBS has led to a significant improvement compared with the original RDF graph. 
With due attention to the investigations and results obtained from the experiments in this work, the importance of summarizing large-scale RDF graphs using the SANSA framework becomes more distinct. 

In the future, we will focus on extending the proposed summarization RDF graph with different types of data from structured to unstructured to retrieve complete results with a reduction in execution time. Also, we intend to evaluate the proposed method in other knowledge bases such as WikiData. 

\section*{Acknowledgments}
This work has been partially supported by the EU H2020 projects CLARIFY [grant number 875160] and PLATOON [grant number 872592]; and the EraMed project  P4-LUCAT [grant number 53000015].

\bibliographystyle{unsrtnat}
\bibliography{mybiblio} 

\end{document}